\newif\ifaistats
\newif\ifcolt
\newif\iffull 
\DeclareMathOperator{\nnz}{\mathsf{nnz}}
\DeclareMathOperator{\ns}{\mathsf{ns}}
\DeclareMathOperator{\sr}{\mathsf{sr}}
\DeclareMathOperator{\var}{var}
\DeclareMathOperator{\tail}{tail}
\DeclareMathOperator{\polylog}{polylog}
\DeclareMathOperator{\rsp}{\mathsf{rsp}}
\DeclareMathOperator{\csp}{\mathsf{csp}}
\def\compactify{\itemsep=0pt \topsep=0pt \partopsep=0pt \parsep=0pt}
\newcommand*\widefbox[1]{\fbox{\hspace{1em}#1\hspace{1em}}}
    \title[Near-Optimal Entrywise Sampling of Numerically Sparse Matrices]{Near-Optimal Entrywise Sampling of Numerically Sparse Matrices%
    \iffull \else
    \protect\thanks{A full version is available at \href{https://arxiv.org/abs/2011.01777}{arXiv:2011.01777}.} \fi}
\thanks{This research was supported in part by NSF CAREER grant 1652257, NSF grant 1934979, ONR Award N00014-18-1-2364 and the Lifelong Learning Machines program from DARPA/MTO.} \Email{vova@cs.jhu.edu}\\
\thanks{Work partially supported by ONR Award N00014-18-1-2364,
  the Israel Science Foundation grant \#1086/18,
  and a Minerva Foundation grant.
} \Email{robert.krauthgamer@weizmann.ac.il}\\
\thanks{This research was supported in part by NSF CAREER grant 1652257, ONR Award N00014-18-1-2364 and the Lifelong Learning Machines program from DARPA/MTO.} \Email{akrish23@jhu.edu}\\
    \title{Near-Optimal Entrywise Sampling of Numerically Sparse Matrices}
    		\author{
			Vladimir Braverman%
			\thanks{This research was supported in part by NSF CAREER grant 1652257, NSF grant 1934979, ONR Award N00014-18-1-2364 and the Lifelong Learning Machines program from DARPA/MTO.} \\ Johns Hopkins University\\ \texttt{vova@cs.jhu.edu}
				\and
			Robert Krauthgamer%
			\thanks{Work partially supported by ONR Award N00014-18-1-2364,
  the Israel Science Foundation grant \#1086/18,
  and a Minerva Foundation grant.
}
			\\ Weizmann Institute of Science \\ \texttt{robert.krauthgamer@weizmann.ac.il}
				\and 
			Aditya Krishnan%
			\thanks{This research was supported in part by NSF CAREER grant 1652257, ONR Award N00014-18-1-2364 and the Lifelong Learning Machines program from DARPA/MTO.} \\ Johns Hopkins University \\ \texttt{akrish23@jhu.edu}
				\and 
			Shay Sapir \\ Weizmann Institute of Science \\ \texttt{shay.sapir@weizmann.ac.il}
		}
\begin{document}

\maketitle

\begin{abstract}
Many real-world data sets are sparse or almost sparse. 
One method to measure this for a matrix $A\in \R^{n\times n}$ 
is the \emph{numerical sparsity}, denoted $\mathsf{ns}(A)$, 
defined as the minimum $k\geq 1$ such that 
$\|a\|_1/\|a\|_2 \leq \sqrt{k}$
for every row and every column $a$ of $A$. 
This measure of $a$ is smooth and is clearly only smaller
than the number of non-zeros in the row/column $a$.

The seminal work of \cite{achlioptas2007fast} 
has put forward the question of approximating an input matrix $A$ by entrywise sampling. 
More precisely, the goal is to quickly compute a sparse matrix $\tilde{A}$
satisfying $\|A - \tilde{A}\|_2 \leq \epsilon \|A\|_2$
(i.e., additive spectral approximation) given an error parameter $\epsilon>0$.
The known schemes sample and rescale a small fraction of entries from $A$.

We propose a scheme that sparsifies an almost-sparse matrix $A$ ---
it produces a matrix $\tilde{A}$ with $O(\epsilon^{-2}\mathsf{ns}(A) \cdot n\ln n)$ non-zero entries with high probability.
We also prove that this upper bound on $\mathsf{nnz}(\tilde{A})$ 
is \emph{tight} up to logarithmic factors. 
Moreover, our upper bound improves when the spectrum of $A$ decays quickly
(roughly replacing $n$ with the stable rank of $A$). 
Our scheme can be implemented in time $O(\mathsf{nnz}(A))$ when $\|A\|_2$ is given.
Previously, a similar upper bound was obtained by~\cite{achlioptas2013near} but only for a restricted class of inputs that does not even include symmetric or covariance matrices. 
Finally, we demonstrate two applications of these sampling techniques, 
to faster approximate matrix multiplication,
and to ridge regression by using sparse preconditioners. 
\end{abstract}

\pagenumbering{arabic}

\section{\ifaistats INTRODUCTION \else Introduction \fi}
Matrices for various tasks in machine learning and data science often contain millions or even billions of dimensions. At the same time, they often possess structure that can be exploited to design more efficient algorithms. Sparsity in the rows and/or columns of the matrix is one such phenomenon for which many computational tasks on matrices admit faster algorithms, e.g., low-rank approximation~\citep{ghashami2016efficient,pmlr-v80-huang18a}, 
regression problems~\citep{johnson2013accelerating} and semi-definite programming~\citep{d2011semidefsubsampling,arora2005fast}. Sparsity, however, is not a numerically smooth quantity. Specifically, for a vector $x \in \R^n$ 
to be $k$-sparse, at least $n-k$ entries of $x$ must be $0$.
In practice, many entries could be small but non-zero, e.g.~due to noise, and thus the vector would be considered dense.  


A smooth analogue of sparsity for a matrix $A \in \R^{m \times n}$
can be defined as follows. 
First, for a row (or column) vector $a \in \R^n$, define its \emph{numerical sparsity} \citep{lopes2013estimating,gupta2018exploiting} to be 
\begin{equation} \label{eq:ns}
  \ns(a) \eqdef 
  \min\{ k\geq 0: \|a\|_1 \leq \sqrt{k}\|a\|_2 \} .
\end{equation}
This value is clearly at most the number of non-zeros in $a$, denoted $\|a\|_0$, but can be much smaller. 
Earlier work used variants of this quantity,
referring to $\ns(a)$ as the $\ell_1/\ell_2$-sparsity of the vector
\citep{hoyer2004non,hurley2009comparing}.
We further define the numerical sparsity of a matrix $A$, denoted $\ns(A)$, 
to be the maximum numerical sparsity of any of its rows and columns. 


In order to take advantage of sparse matrices in various computational tasks, a natural goal is to approximate a matrix $A$ with numerical sparsity $\ns(A)$ with another matrix $\tilde{A}$ of the same dimensions, that is $k$-sparse for $k=O(\ns(A))$ (i.e., every row and column is $k$-sparse). 
The seminal work of ~\cite{achlioptas2007fast} introduced a framework for matrix sparsification via entrywise sampling for approximating the matrix $A$ in spectral-norm. Specifically, they compute a sparse matrix $\tilde{A}$ by sampling and rescaling a small fraction of entries from $A$ such that with high probability $\|A - \tilde{A}\|_2 \leq \epsilon\|A\|_2$ for some error parameter $\epsilon > 0$, where $\|\cdot\|_2$ denotes the spectral-norm. 
This motivates the following definition. 

\begin{definition}
An \emph{$\epsilon$-spectral-norm approximation} for $A \in \R^{m \times n}$ is a matrix $\tilde{A} \in \R^{m \times n}$ satisfying 
\begin{equation}  \label{eq:epsapprox}
  \|\tilde{A} - A\|_2 \leq \epsilon\|A\|_2. 
\end{equation}
When $\tilde{A}$ is obtained by sampling and rescaling entries from $A$, we call it an \emph{$\epsilon$-spectral-norm sparsifier}.
\end{definition}

Before we continue, let us introduce necessary notations. Here and throughout, 
we denote the number of non-zero entries in a matrix $A$ by $\nnz(A)$, the Frobenius-norm of $A$ by $\|A\|_F$,
the stable-rank of $A$ by $\sr(A) \eqdef \|A\|_F^2/\|A\|_2^2$,
the $i$-th row and the $j$-th column of $A$ by $A_i$ and $A^j$, respectively,
and the row-sparsity and column-sparsity of $A$ by $\rsp(A)\eqdef \max_i \|A_i\|_0$ and 
$\csp(A)\eqdef \max_j \|A^j\|_0$, respectively. 

The framework of~\cite{achlioptas2007fast} can be used as a preprocessing step that ``sparsifies" numerically sparse matrices in order to speed up downstream tasks. It thus motivated a line of  work on sampling schemes \citep{arora2006fast,gittens2009error,drineas2011note,nguyen2015tensor,achlioptas2013near,kundu2014note,kundu2017recovering}, in which the output $\tilde{A}$ is an unbiased estimator of $A$, and the sampling distributions are simple functions of $A$ and hence can be computed easily, say, in nearly $O(\nnz(A))$-time and with one or two passes over the matrix. Under these constraints, the goal is simply to minimize the sparsity of the $\epsilon$-spectral-norm sparsifier $\tilde{A}$. 

The latest work, by \cite{achlioptas2013near}, provides a bound for a restricted class 
of ``data matrices". Specifically, they look at matrices $A \in \R^{m \times n}$ 
such that $\min_i \|A_i\|_1\geq \max_j \|A^j\|_1$, which can be a reasonable assumption 
when $m \ll n$.
This restricted class does not include the class of square matrices, 
and hence does not include symmetric matrices such as covariance matrices.
Hence, an important 
question is whether their results extend to a larger class of matrices. 
Our main result, described in the next section, resolves this concern in the affirmative.

\subsection{Main Results}

We generalize the sparsity bound of~\cite{achlioptas2013near}, which is the best currently known, to all matrices $A \in \R^{m \times n}$. 
Our main result is a sampling scheme to compute an $\epsilon$-spectral-norm sparsifier 
for numerically sparse matrices $A$, as follows.

\begin{restatable}{theorem}{sparsifier}\label{thm:spectralSparsifier}
There is an algorithm that, given a matrix $A\in\R^{m\times n}$ and a parameter $\epsilon>0$, where $m\geq n$,
computes with high probability 
an $\epsilon$-spectral-norm sparsifier $\tilde{A}$ for $A$ 
with expected sparsity 
\ifaistats 
$\E(\nnz(\tilde{A})) = O\left(\epsilon^{-2}\ns(A) \sr(A)\log m + \epsilon^{-1}\sqrt{\ns(A)\sr(A) n}\log m \right).$ 
\else 
$$\E(\nnz(\tilde{A})) = O\left(\epsilon^{-2}\ns(A) \sr(A)\log m + \epsilon^{-1}\sqrt{\ns(A)\sr(A) n} \log m \right).$$ 
\fi
Moreover, it runs in $O(\nnz(A))$-time when a constant factor estimate of $\|A\|_2$ is given.%
\footnote{A constant factor estimate of $\|A\|_2$ can be computed in $\tilde{O}(\nnz(A))$-time by the power method.}
\end{restatable}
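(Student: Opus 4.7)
\emph{Proof proposal.} The plan is to sample each entry $a_{ij}$ independently with a carefully chosen probability $p_{ij}$, rescale by $1/p_{ij}$ to keep the estimator unbiased, and then invoke a matrix Bernstein inequality to bound $\|\tilde A-A\|_{2}$. Concretely, I would set
\[
p_{ij}\;=\;\min\!\Bigl\{1,\; \tfrac{c\ln m}{\epsilon^{2}\|A\|_{2}^{2}}\,|a_{ij}|\bigl(\|A_i\|_{1}+\|A^{j}\|_{1}\bigr)\;+\;\tfrac{c\ln m}{\epsilon\|A\|_{2}}\,|a_{ij}|\Bigr\}
\]
for an absolute constant $c$, let $Y_{ij}=(a_{ij}/p_{ij})e_{i}e_{j}^{\top}$ with probability $p_{ij}$ and $Y_{ij}=0$ otherwise, and set $\tilde A=\sum_{ij}Y_{ij}$. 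Unbiasedness $\mathbb{E}[\tilde A]=A$ is immediate, all row and column $\ell_{1}$-norms can be computed in one pass once a constant-factor estimate of $\|A\|_{2}$ is available, and every nonzero entry is sampled in $O(1)$ time, giving the claimed $O(\nnz(A))$ running time.

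To certify the spectral approximation, I would apply a rectangular matrix Bernstein inequality to the mean-zero independent sum $\tilde A-A=\sum_{ij}(Y_{ij}-a_{ij}e_{i}e_{j}^{\top})$. A direct calculation shows that the two variance proxies are diagonal matrices whose $i$-th (resp.\ $j$-th) entries are bounded by $\sum_{j}a_{ij}^{2}/p_{ij}$ (resp.\ $\sum_{i}a_{ij}^{2}/p_{ij}$). The row term in $p_{ij}$ bounds $\sum_{j}a_{ij}^{2}/p_{ij}$ by $\tfrac{\epsilon^{2}\|A\|_{2}^{2}}{c\ln m\,\|A_{i}\|_{1}}\sum_{j}|a_{ij}|=\epsilon^{2}\|A\|_{2}^{2}/(c\ln m)$ uniformly over $i$, and the column term handles the column variance analogously. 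The max-norm contribution $R=\max_{ij}|a_{ij}|/p_{ij}$ is controlled by the additive $|a_{ij}|/\|A\|_{2}$ summand in $p_{ij}$, which forces $R\leq\epsilon\|A\|_{2}/(c\ln m)$. With $c$ a large enough absolute constant, matrix Bernstein then yields $\|\tilde A-A\|_{2}\leq\epsilon\|A\|_{2}$ with probability at least $1-m^{-\Omega(1)}$.

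For the expected sparsity, $\mathbb{E}[\nnz(\tilde A)]\leq\sum_{ij}p_{ij}$, which I would split according to the three summands in $p_{ij}$. The row-$\ell_{1}$ contribution equals $\tfrac{c\ln m}{\epsilon^{2}\|A\|_{2}^{2}}\sum_{i}\|A_{i}\|_{1}^{2}$; invoking the defining inequality of numerical sparsity $\|A_{i}\|_{1}^{2}\leq\ns(A)\,\|A_{i}\|_{2}^{2}$ and summing over $i$, this becomes $O(\epsilon^{-2}\ns(A)\sr(A)\ln m)$, and the column-$\ell_{1}$ term gives the same bound. For the $\ell_{1}$ mass contribution $\tfrac{c\ln m}{\epsilon\|A\|_{2}}\sum_{ij}|a_{ij}|$, the key is to sum by \emph{columns} (there are $n\leq m$ of them): $\sum_{ij}|a_{ij}|=\sum_{j}\|A^{j}\|_{1}\leq\sqrt{\ns(A)}\sum_{j}\|A^{j}\|_{2}\leq\sqrt{\ns(A)\,n}\,\|A\|_{F}$ by Cauchy--Schwarz, which delivers the $O(\epsilon^{-1}\sqrt{\ns(A)\sr(A)\,n}\,\ln m)$ term.

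The main obstacle is calibrating $p_{ij}$ so that all three matrix Bernstein quantities (the row variance, the column variance, and the max-norm) are simultaneously under control while $\sum_{ij}p_{ij}$ stays within the stated budget. Symmetrizing the sampling distribution between rows and columns is exactly what removes the restrictive assumption $\min_{i}\|A_{i}\|_{1}\geq\max_{j}\|A^{j}\|_{1}$ of~\cite{achlioptas2013near}, at the price of paying for both a row-$\ell_{1}$ and a column-$\ell_{1}$ term; that the two terms individually match the single bound there is what preserves the sparsity guarantee. A minor technical subtlety is handling entries for which the formula for $p_{ij}$ exceeds $1$: we keep those deterministically so they contribute no noise to $\tilde A-A$, and their count is already accounted for in $\sum_{ij}p_{ij}$, so the sparsity bound is unaffected.
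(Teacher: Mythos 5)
Your proposal is correct and follows essentially the same route as the paper: sample each entry independently with a probability built from a row-$\ell_1$ term, a column-$\ell_1$ term, and a global $\ell_1$ term, and control the two variance proxies and the range bound in the matrix-Bernstein inequality with the corresponding pieces of that probability. The only cosmetic difference is that you fold the three normalization constants directly into $p_{ij}$ (requiring the $\|A\|_2$ estimate up front), whereas the paper keeps the three terms as normalized distributions $p^{(1)},p^{(2)},p^{(3)}$, takes their max, and scales by a single budget parameter $s$ that is fixed at the end — two parameterizations of the same scheme.
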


We obtain this result by improving the main technique of~\cite{achlioptas2013near}. 
Their sampling distribution arises from optimizing a concentration bound, called the matrix-Bernstein inequality, for the sum of matrices formed by sampling entries independently. 
Our distribution is obtained by the same approach, but arises from considering the columns and rows simultaneously.

In addition to the sampling scheme in Theorem \ref{thm:spectralSparsifier}, we analyze $\ell_1$-sampling from every row (in Section \ref{sec:SecondSampling}).\footnote{Sampling entry $A_{ij}$ with probability proportional to $|A_{ij}|/\|A_i\|_1$}
This gives a worse bound than the above bound, roughly replacing the $\sr(A)$ term with $n$, but has the added advantage that the sampled matrix has uniform row-sparsity.

\paragraph{Lower Bound.}
Our next theorem complements our main result with a lower bound on the sparsity of \emph{any} $\epsilon$-spectral-norm approximation of a matrix $A$ in terms of its numerical sparsity $\ns(A)$ and error parameter $\epsilon > 0$.%
\footnote{We write $\tilde{O}(f)$ as a shorthand for $O(f \cdot \polylog(nm))$ where $n$ and $m$ are the dimensions of the matrix, and write $O_\epsilon(\cdot)$ when the hidden constant may depend on $\epsilon$.}

\begin{restatable}{theorem}{sparsifierLowerBound}\label{thm:lowerBound} 
Let $0 < \epsilon <\frac{1}{2}$ and $n,k \geq 1$ be parameters satisfying $k \leq O(\epsilon^2 n\log^2 \frac{1}{\epsilon})$.
Then, there exists a matrix $A\in\mathbb{R}^{n\times n}$ such that $\ns(A) = \Theta(k\log^2\frac{1}{\epsilon})$ and, for every matrix $B$ satisfying $\|A-B\|_2\leq \epsilon\|A\|_2$, the sparsity of every row and every column of $B$ is at least 
$\Omega(\epsilon^{-2}k\log^{-2}\frac{1}{\epsilon}) = \tilde{\Omega} (\epsilon^{-2})\cdot \ns(A)$.
\end{restatable}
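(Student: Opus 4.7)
My plan is to construct an explicit two-scale matrix: one that pairs a small number of ``big'' entries controlling the numerical sparsity and spectral norm with a much larger number of carefully scaled ``small'' entries that force any spectral approximation to be dense. Writing $L:=\log(1/\epsilon)$ for brevity, I would take $m := kL^2$, $N := \epsilon^{-2}k/L^2$, and $\delta := \epsilon^2 L^4$, and set $A = A^{\mathrm{big}} + A^{\mathrm{small}}$, where $A^{\mathrm{big}}\in\R^{n\times n}$ is a signed $m$-biregular matrix (each row and column has exactly $m$ entries equal to $\pm 1$) and $A^{\mathrm{small}}$ is a signed $N$-biregular matrix of entries $\pm\delta$, supported disjointly from $A^{\mathrm{big}}$. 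The hypothesis $k\le O(\epsilon^2 n\log^2(1/\epsilon))$ implies $N\le O(n)$, so such a support pattern fits; random signs on random biregular graphs will make $\|A^{\mathrm{big}}\|_2 = \Theta(\sqrt m)$ and $\|A^{\mathrm{small}}\|_2 = \Theta(\delta\sqrt N)$ hold with high probability.

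The first step is a routine verification. A direct calculation gives $\|A_i\|_1 = m + N\delta = 2m$ and $\|A_i\|_2^2 = m + N\delta^2 = m(1 + \epsilon^2 L^4)$ for every row $i$, so $\ns(A) = \Theta(m) = \Theta(k\log^2(1/\epsilon))$ in the relevant regime $\epsilon L^2 \ll 1$ (which holds throughout the parameter range of the theorem). Meanwhile, $\|A^{\mathrm{small}}\|_2/\|A^{\mathrm{big}}\|_2 = \Theta(\epsilon L^2) < 1$, so $\|A\|_2 = \Theta(\|A^{\mathrm{big}}\|_2) = \Theta(\sqrt m)$.

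The core of the lower bound is then a purely row-wise argument. Let $B$ satisfy $\|A-B\|_2\le \epsilon\|A\|_2$. For every row $i$ we have $\|(A-B)_i\|_2\le \|A-B\|_2 \le O(\epsilon\sqrt m)$, hence $\|(A-B)_i\|_2^2 \le O(\epsilon^2 m)$. Since moving any out-of-support entry of $B$ onto $\operatorname{supp}(A)$ only reduces $\|A-B\|_F^2$, we may assume $\operatorname{supp}(B)\subseteq\operatorname{supp}(A)$. If row $i$ of $B$ captures $s_i^{\mathrm{big}}\le m$ big entries and $s_i^{\mathrm{small}}\le N$ small ones, the uncancelled mass is $(m - s_i^{\mathrm{big}}) + (N - s_i^{\mathrm{small}})\delta^2 \le O(\epsilon^2 m)$. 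Isolating the second term yields $s_i^{\mathrm{small}} \ge N - O(\epsilon^2 m/\delta^2) = N\cdot\bigl(1 - O(L^{-4})\bigr) = \Omega(\epsilon^{-2}k\log^{-2}(1/\epsilon))$, so $s_i \ge s_i^{\mathrm{small}}$ satisfies the claimed row-sparsity bound. The column-sparsity bound follows by the identical argument applied to $A^T$, which by biregularity has the same structure.

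The hard part is calibrating $\delta$: it must be small enough that $N\delta \lesssim m$, so that the small part contributes only a constant factor to $\|A\|_1$ and preserves $\ns(A) = \Theta(m)$, yet large enough that $N\delta^2 \gg \epsilon^2\|A\|_2^2$, so that the full small part cannot be omitted. The sweet spot up to constants is $\delta = \Theta(\epsilon^2 \log^4(1/\epsilon))$, and every polylogarithmic factor in the statement traces back to this balancing. A secondary technicality is ensuring $\|A^{\mathrm{big}}\|_2 = \Theta(\sqrt m)$ without the spurious $\sqrt{\log n}$ factor that a naive matrix-Bernstein bound would produce; I expect this to be handled either by a Friedman-type spectral-gap estimate for random signed biregular graphs or by explicitly picking a Ramanujan-style construction for the underlying bipartite graph.
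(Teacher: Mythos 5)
Your construction is genuinely different from the paper's, and the core row-wise argument you give is sound. The paper builds a circulant matrix from a single ``power-law'' vector with coordinates $a_j = 2^{-(1+\alpha)\lfloor\log j\rfloor}$, tensored with a $k\times k$ Hadamard matrix $H_k$ to scale the numerical sparsity up to $\Theta(kL^2)$; the decisive payoff of that choice is that the spectral norm is \emph{exact} --- for a circulant with nonnegative first row, $\|A\|_2=\|a\|_1$ by Fourier diagonalization, and the Hadamard tensor multiplies it by exactly $\sqrt{k}$ --- so no concentration estimate is needed anywhere, and the only constraint is $\alpha=1/L\in[\log^{-1}(n/k),1)$, which follows from $k\le O(\epsilon^2 n L^2)$. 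Your two-scale biregular matrix replaces this geometric decay with a cleaner ``big entries set $\|A\|_1$ and $\|A\|_2$, small entries force the sparsity'' picture, and your tail-of-the-row argument (remove $s$ entries, bound the leftover $\ell_2$ mass against $\epsilon\|A\|_2$) is the same mechanism as the paper's, applied to a step-function tail rather than a power-law tail.

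The one genuine gap is exactly the one you flag at the end: you need $\|A^{\mathrm{big}}\|_2=\Theta(\sqrt m)$ for a signed $m$-biregular matrix, and ``random signs on a random biregular graph'' does not give this for free. The generic Bandeira--van Handel bound yields $\|A^{\mathrm{big}}\|_2\lesssim \sqrt m+\sqrt{\log n}$, so you only get $\Theta(\sqrt m)$ once $m=kL^2\gtrsim\log n$; the theorem allows $k\ge1$ with $n$ arbitrary, and in the regime $kL^2\ll\log n$ the extra factor propagates straight into your sparsity bound, losing a $\log n$. Friedman's theorem is also not directly the right tool, since it controls the second eigenvalue of a random regular graph rather than the norm of a random signing. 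Concrete fixes do exist: take the circulant band support $\{(i,j):(j-i)\bmod n\in[0,m)\}$ and place a Rudin--Shapiro $\pm1$ sequence of length $m$ on it (giving $\|A^{\mathrm{big}}\|_2\le C\sqrt m$ deterministically), or invoke the Marcus--Spielman--Srivastava interlacing-families theorem, which provides a $\pm1$ signing of any $m$-regular bipartite graph with signed-adjacency norm $\le 2\sqrt{m-1}$. Either closes the gap, but the paper's circulant construction sidesteps the issue altogether, and that is precisely why the paper uses it. Two smaller notes: the ``move out-of-support entries of $B$ onto $\mathrm{supp}(A)$'' step is unnecessary (your inequality $\|(A-B)_i\|_2^2\ge\sum_{j:B_{ij}=0}A_{ij}^2$ holds for arbitrary $B$, and the WLOG would anyway be suspect for spectral rather than Frobenius norm); and your bound $s_i^{\mathrm{small}}\ge N(1-O(L^{-4}))$ is only $\Omega(N)$ once $\epsilon$ is below a fixed constant so that $L^4$ beats the implicit constant --- fine for an asymptotic $\Omega(\cdot)$ statement, but worth stating.
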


While the lower bound shows that the worst-case dependence on the parameters $\ns(A)$ and $\epsilon$ is optimal, it is based on a matrix with stable rank $\Omega(n)$. Settling the sample complexity when the stable rank is $o(n)$ is an interesting open question that we leave for future work.

\subsection{Comparison to Previous Work}
The work of \cite{achlioptas2007fast} initiated a long line of work on entrywise sampling schemes that approximate a matrix under spectral-norm \citep{arora2006fast, gittens2009error, drineas2011note, kundu2014note,kundu2017recovering,nguyen2015tensor, achlioptas2013near}. Sampling entries independently has the advantage that the output matrix can be seen as a sum of independent random matrices whose spectral-norm can be bounded using known matrix concentration bounds. All previous work uses such matrix concentration bounds with the exception of \cite{arora2006fast} who bound the spectral-norm of the resulting matrix by analyzing the Rayleigh quotient of all possible vectors. 

Natural distributions to sample entries are the $\ell_2$ and $\ell_1$ distributions, which correspond to sampling entry $A_{ij}$ with probability proportional to $A_{ij}^2/\|A\|_F^2$ and $|A_{ij}|/\|A\|_1$ respectively.%
\footnote{Here and henceforth we denote by $\|A\|_1$ the entry-wise $l_1$ norm.}

Prior work that use variants of the $\ell_2$ sampling \citep{achlioptas2007fast,drineas2011note,nguyen2015tensor, kundu2014note} point out that sampling according to the $\ell_2$ distribution causes small entries to ``blow-up" when sampled. Some works, e.g.~\cite{drineas2011note}, get around this by zeroing-out small entries or by exceptional handling of small entries, e.g.~\cite{achlioptas2007fast}, while others used distributions that combine the $\ell_1$ and $\ell_2$ distributions, e.g.~\cite{kundu2014note}. All these works sample $\Omega(\epsilon^{-2}n\sr(A))$ entries in expectation to achieve an $\epsilon$-spectral-norm approximation and our Theorem \ref{thm:spectralSparsifier} provides an asymptotically better bound. 
For a full comparison see Table \ref{table:sparsificationComparison}.

All these algorithms, including the algorithm of Theorem \ref{thm:spectralSparsifier}, sample a number of entries corresponding to $\sr(A)$, hence they must have an estimate of it, which requires estimating $\|A\|_2$. 
An exception is the bound in Theorem \ref{thm:sparsifierForAMM}, which can be achieved without this estimate.
In practice, however, and in previous work in this area, there is a sampling budget $s \geq 0$ and $s$ samples are drawn according to the stated distribution, avoiding the need for this estimate. In this case, the algorithm of Theorem \ref{thm:spectralSparsifier} can be implemented in two-passes over the data and in $O(\nnz(A))$ time.

\ifcolt
\renewcommand{\arraystretch}{1.25}
\begin{table*}[!t]
\caption{\label{table:sparsificationComparison} Comparison between schemes for $\epsilon$-spectral-norm sparsification. 
The first two entries in the third column present the ratio between the referenced sparsity and that of Theorem~\ref{thm:spectralSparsifier}. 
} 
\begin{center}
\begin{tabulary}{\textwidth}{|p{0.35\textwidth} > {\raggedright}p{0.27\textwidth} >  {\raggedright\arraybackslash}p{0.30\textwidth}|}
\hline
\textbf{Expected Number of Samples} &{\textbf{Reference}}     & \textbf{Compared to Thm.~\ref{thm:spectralSparsifier}}\\ 
\hline
\hline
$O(\epsilon^{-1}n\sqrt{\ns(A)\sr(A)})$            &\cite{arora2006fast}  & $\tilde{O}_\epsilon \Big(\min\Big( \tfrac{n}{\sqrt{\ns(A)\sr(A)}} , \sqrt{{n}} \Big)\Big)$ \\ 
 $O(\epsilon^{-2}n\sr(A) + n\polylog(n))$  &\cite{achlioptas2007fast} & $\tilde{O}_\epsilon \Big(\min\Big( \tfrac{n}{\ns(A)} , \sqrt{\tfrac{n\sr(A)}{\ns(A)}} \Big)\Big)$\\
$\tilde{O}(\epsilon^{-2}n\sr(A))$ &\cite{drineas2011note,kundu2014note} &  \\
\hline
$\tilde{O}({\epsilon^{-2}}\ns(A) \sr(A) +$ $\epsilon^{-1}\sqrt{\ns(A) \sr(A)n})$ & \cite{achlioptas2013near}; Theorem \ref{thm:spectralSparsifier}  
& \cite{achlioptas2013near} is only for data matrices\\ 
\hline
$\tilde{O}(\epsilon^{-2}n\ns(A))$ & Theorem \ref{thm:sparsifierForAMM} & bounded row-sparsity \\
\hline\hline
$\Omega(\epsilon^{-2}n\ns(A) \log^{-4}\frac{1}{\epsilon})$ &Theorem \ref{thm:lowerBound} &$\sr(A)=\Theta(n)$ \\
\hline
\end{tabulary}
\end{center}
\end{table*}

\else
\renewcommand{\arraystretch}{1.25}
\begin{table*}[!t]
\caption{\label{table:sparsificationComparison} Comparison between schemes for $\epsilon$-spectral-norm sparsification. 
The first two entries in the third column present the ratio between the referenced sparsity and that of Theorem~\ref{thm:spectralSparsifier}. 
} 
\begin{center}
\begin{tabulary}{\textwidth}{|p{0.39\textwidth} > {\raggedright}p{0.27\textwidth} >  {\raggedright\arraybackslash}p{0.27\textwidth}|}
\hline
\textbf{Expected Number of Samples} &{\textbf{Reference}}     & \textbf{Compared to Thm.~\ref{thm:spectralSparsifier}}\\ 
\hline
\hline
$O(\epsilon^{-1}n\sqrt{\ns(A)\sr(A)})$            &\cite{arora2006fast}  & $\tilde{O}_\epsilon \Big(\min\Big( \tfrac{n}{\sqrt{\ns(A)\sr(A)}} , \sqrt{{n}} \Big)\Big)$ \\ 
 $O(\epsilon^{-2}n\sr(A) + n\polylog(n))$  &\cite{achlioptas2007fast} & $\tilde{O}_\epsilon \Big(\min\Big( \tfrac{n}{\ns(A)} , \sqrt{\tfrac{n\sr(A)}{\ns(A)}} \Big)\Big)$\\
$\tilde{O}(\epsilon^{-2}n\sr(A))$ &\cite{drineas2011note,kundu2014note} &  \\
\hline
$\tilde{O}({\epsilon^{-2}}\ns(A) \sr(A) + \epsilon^{-1}\sqrt{\ns(A) \sr(A)n})$ & \cite{achlioptas2013near}; Theorem \ref{thm:spectralSparsifier}  
& \cite{achlioptas2013near} is only for data matrices\\ 
\hline
$\tilde{O}(\epsilon^{-2}n\ns(A))$ & Theorem \ref{thm:sparsifierForAMM} & bounded row-sparsity \\
\hline\hline
$\Omega(\epsilon^{-2}n\ns(A) \log^{-4}\frac{1}{\epsilon})$ &Theorem \ref{thm:lowerBound} &$\sr(A)=\Theta(n)$ \\
\hline
\end{tabulary}
\end{center}
\ifaistats \else  \fi
\end{table*}
\fi

\subsection{Applications of Spectral-Norm Sparsification}
We provide two useful applications of spectral-norm sparsification.
More precisely, we use the sparsification to speed up two computational tasks on numerically sparse matrices: 
approximate matrix multiplication and approximate ridge regression. 
This adds to previous work, which showed applications 
to low-rank approximation \citep{achlioptas2007fast}, 
to semidefinite programming \citep{arora2006fast}, 
and to PCA and sparse PCA \citep{kundu2017recovering}. 
These applications work in a black-box manner, 
and can thus employ our improved sparsification scheme.

\paragraph{Application \RNum{1}: Approximate Matrix Multiplication (AMM).} Given matrices $A\in\mathbb{R}^{m\times n},B\in\mathbb{R}^{n\times p}$ and error parameter $\epsilon>0$, the goal is to compute a matrix $C \in\mathbb{R}^{m\times p}$ such that $\|AB-C\|\leq \epsilon \|A\|\cdot\|B\|$, where the norm is usually either Frobenius-norm $\|\cdot\|_F$ or spectral-norm $\|\cdot\|_2$. In Section \ref{sec:AMM}, we provide algorithms for both error regimes by combining our entrywise sampling scheme with previous AMM algorithms that sample a small number of columns of $A$ and rows of $B$.  

\begin{restatable}{theorem}{AMMspectralNorm}\label{thm:AMM_spectralNorm} 
There exists an algorithm that,
given matrices $A\in\mathbb{R}^{m\times n},B\in\mathbb{R}^{n\times p}$ parameter $0<\epsilon <\frac{1}{2}$ and constant factor estimates of $\|A\|_2$ and $\|B\|_2$, computes a matrix $C\in\mathbb{R}^{m\times p}$ satisfying with high probability $\|AB-C\|_2\leq \epsilon\|A\|_2\|B\|_2$ in time \ifaistats $O(\nnz(A)+\nnz(B)) + \tilde{O}(\epsilon^{-6}\sqrt{\sr(A)\sr(B)}\ns(A)\ns(B)).$
\else
$$O(\nnz(A)+\nnz(B)) + \tilde{O}(\epsilon^{-6}\sqrt{\sr(A)\sr(B)}\ns(A)\ns(B)).$$
\fi
\end{restatable}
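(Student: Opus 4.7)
My plan is a two-stage algorithm: first entrywise-sparsify $A$ and $B$ to obtain sparse $\tilde A, \tilde B$ that approximate them in spectral norm, then run a standard column-row sampling approximate matrix multiplication (AMM) on $\tilde A, \tilde B$ to form the output $C$. The sparsification step is what buys the speedup: it uniformly bounds the sparsity of each column of $\tilde A$ and each row of $\tilde B$, which caps the cost of each rank-one update produced by AMM, while its spectral-norm error contributes only an $O(\epsilon)$ factor to the final guarantee.

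\textbf{Steps.} I would use the uniform-row-sparsity variant of our entrywise sampler (Theorem~\ref{thm:sparsifierForAMM}) applied once to $A$ and once to $B^\top$, producing $\tilde A$ with $\csp(\tilde A) \le s_A = \tilde O(\epsilon^{-2}\ns(A))$ and $\tilde B$ with $\rsp(\tilde B) \le s_B = \tilde O(\epsilon^{-2}\ns(B))$ that satisfy $\|\tilde A - A\|_2 \le \epsilon'\|A\|_2$ and $\|\tilde B - B\|_2 \le \epsilon'\|B\|_2$ for $\epsilon' = \Theta(\epsilon)$; given the two spectral-norm estimates this stage runs in $O(\nnz(A)+\nnz(B))$ time. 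Then I apply a standard spectral-norm column-row AMM scheme (e.g.\ a Magen--Zouzias type scheme with sampling probabilities $p_i \propto \|A^i\|_2^2/\|A\|_F^2 + \|B_i\|_2^2/\|B\|_F^2$), drawing $c = \tilde O(\epsilon^{-2}\sqrt{\sr(A)\sr(B)})$ index samples $i_1, \ldots, i_c$ and returning $C = \tfrac{1}{c}\sum_t \tilde A^{i_t}\tilde B_{i_t}^\top/p_{i_t}$. A matrix-Bernstein analysis of this estimator yields $\|\tilde A\tilde B - C\|_2 \le \epsilon'\|\tilde A\|_2\|\tilde B\|_2$ with high probability.

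\textbf{Accuracy and cost.} Accuracy follows by triangle and submultiplicative inequalities,
\begin{equation*}
\|AB - C\|_2 \le \|A-\tilde A\|_2\|B\|_2 + \|\tilde A\|_2\|B-\tilde B\|_2 + \|\tilde A\tilde B - C\|_2 = O(\epsilon')\|A\|_2\|B\|_2,
\end{equation*}
so taking $\epsilon' = \Theta(\epsilon)$ gives the claimed error. For the runtime, each outer product $\tilde A^{i_t}\tilde B_{i_t}^\top$ has at most $s_A\cdot s_B$ nonzero entries, so aggregating them costs $\tilde O(c \cdot s_A \cdot s_B) = \tilde O(\epsilon^{-6}\sqrt{\sr(A)\sr(B)}\ns(A)\ns(B))$ arithmetic operations, giving the stated total.

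\textbf{Main obstacle.} The delicate point is applying the spectral-norm AMM bound with the stable ranks of the \emph{original} matrices $A,B$ rather than those of $\tilde A, \tilde B$ (entrywise sampling can inflate $\|\tilde A\|_F$ and thereby $\sr(\tilde A)$). My intended workaround is to define the AMM sampling distribution using $\|A^i\|_2$ and $\|B_i\|_2$ (available in one pass), while letting the rank-one updates use $\tilde A^{i_t}, \tilde B_{i_t}$; the estimator remains unbiased for $\tilde A\tilde B$, the per-term spectral bound is controlled by the column/row norms of $A$ and $B$, and the matrix-Bernstein calculation then yields the desired $\sqrt{\sr(A)\sr(B)}$ dependence. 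Verifying this interplay between the two sampling stages, and checking that the sparsifier guarantee from Theorem~\ref{thm:sparsifierForAMM} composes cleanly with the AMM tail bound to give high-probability success, is the main technical work.
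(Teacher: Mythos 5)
Your overall plan — sparsify $A$ and $B$ by the row/column $\ell_1$ sampler of Theorem~\ref{thm:sparsifierForAMM}, then run a Magen--Zouzias spectral-norm AMM on the sparse matrices, and aggregate the errors by the triangle inequality — matches the paper's proof, and you correctly identify the one delicate point, namely that the stable ranks fed to the AMM routine are those of $\tilde A,\tilde B$ rather than $A,B$. However, your proposed workaround does not go through. You would sample outer-product indices with probabilities $p_i$ built from $\|A^i\|_2,\|B_i\|_2$ but then use the sparse columns/rows $\tilde A^{i},\tilde B_{i}$ inside the estimator, and you claim ``the per-term spectral bound is controlled by the column/row norms of $A$ and $B$.'' That claim is false: the $\ell_1$-sampled column $\tilde A^i$ satisfies only the deterministic bound $\|\tilde A^i\|_2 \le \|A^i\|_1 \le \sqrt{\ns(A)}\,\|A^i\|_2$, and in the worst case this is tight (e.g.\ if all $s$ samples hit the same coordinate). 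Hence the Bernstein $R$-parameter for your decoupled estimator is inflated by $\sqrt{\ns(A)\ns(B)}$ relative to the standard analysis, which would force a correspondingly larger number $c$ of outer-product samples and destroy the claimed $\tilde O(\epsilon^{-6}\sqrt{\sr(A)\sr(B)}\,\ns(A)\ns(B))$ runtime.

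The paper handles the obstacle more directly: by Lemma~\ref{lem:vectorEstimation_GS18} each row of $A'$ satisfies $\E\|A'_i\|_2^2 \le \bigl(1+O(\epsilon^2/\log(m+n))\bigr)\|A_i\|_2^2$, so summing over rows gives $\E\|A'\|_F^2 \le \bigl(1+O(\epsilon^2/\log(m+n))\bigr)\|A\|_F^2$; combined with $\|A'\|_2 \in (1\pm O(\epsilon))\|A\|_2$ from the spectral-norm guarantee, this yields $\sr(A')\in(1\pm O(\epsilon))\sr(A)$, so one can invoke Lemma~\ref{lem:AMM_of_MZ11_spectral} directly on $A',B'$ with their own sampling probabilities — no decoupling needed. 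So the fix is to control $\sr(A')$ (via the Frobenius-norm expectation bound from the paper's Lemma~\ref{lem:vectorEstimation_GS18}) rather than trying to carry the original norms through the AMM tail bound. One more small point: to bound $\csp(\tilde A)$ and $\rsp(\tilde B)$ you should apply Theorem~\ref{thm:sparsifierForAMM} to $A^\top$ and to $B$, not to $A$ and $B^\top$ as written.
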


\begin{restatable}{theorem}{AMMfrobeniusNorm}\label{thm:AMM_frobeniusNorm}
There exists an algorithm that,
given matrices $A\in\mathbb{R}^{m\times n},B\in\mathbb{R}^{n\times p}$ and parameter $0<\epsilon<\frac{1}{2}$, 
computes a matrix $C\in\mathbb{R}^{m\times p}$ satisfying $\E\|AB-C\|_F\leq \epsilon\|A\|_F\|B\|_F$ in time \ifaistats $O(\nnz(A)+\nnz(B)+ \epsilon^{-6}\ns(A)\ns(B)).$
\else
$$O(\nnz(A)+\nnz(B)+ \epsilon^{-6}\ns(A)\ns(B)).$$
\fi
\end{restatable}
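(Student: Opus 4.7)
The plan is to combine an entrywise column/row sparsification with the classical Drineas--Kannan--Mahoney (DKM) column-row sampling for approximate matrix multiplication. First, I would independently sparsify each column $A^j$ of $A$ by $\ell_1$ sampling: draw $k_A=O(\epsilon^{-2}\ns(A))$ i.i.d.\ indices $i$ with probability $|A_{ij}|/\|A^j\|_1$, form $\tilde A^j$ by averaging the corresponding rescaled basis vectors, and cap the sample count at $\min(k_A,\|A^j\|_0)$ so that the per-column work is proportional to what is kept and the total sparsification work for $A$ is $O(\nnz(A))$. The standard variance computation gives $\E\|\tilde A^j-A^j\|_2^2\leq \|A^j\|_1^2/k_A\leq \ns(A)\|A^j\|_2^2/k_A$, which summed across columns yields $\E\|A-\tilde A\|_F^2\leq\epsilon^2\|A\|_F^2$; by construction $\csp(\tilde A)\leq k_A$. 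Apply the same procedure row-wise to $B$ with $k_B=O(\epsilon^{-2}\ns(B))$ to obtain $\tilde B$ with $\rsp(\tilde B)\leq k_B$ and $\E\|B-\tilde B\|_F^2\leq\epsilon^2\|B\|_F^2$, independent of the randomness used for $\tilde A$.

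Second, apply DKM column-row sampling to $(\tilde A,\tilde B)$: compute $\|\tilde A^i\|_2$ and $\|\tilde B_i\|_2$ for all $i$ in $O(\nnz(\tilde A)+\nnz(\tilde B))=O(\nnz(A)+\nnz(B))$ time, draw $s=O(\epsilon^{-2})$ indices $i_1,\dots,i_s$ with probabilities $p_i\propto\|\tilde A^i\|_2\|\tilde B_i\|_2$, and output $C:=\tfrac{1}{s}\sum_{t=1}^s\tilde A^{i_t}\tilde B_{i_t}/p_{i_t}$. Each outer product $\tilde A^{i_t}\tilde B_{i_t}$ has at most $\csp(\tilde A)\cdot\rsp(\tilde B)\leq k_A k_B$ nonzeros and can be both computed and stored in that time, so the AMM phase costs $O(s\,k_A k_B)=O(\epsilon^{-6}\ns(A)\ns(B))$. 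Conditioned on $(\tilde A,\tilde B)$, the DKM bound gives $\E[\|\tilde A\tilde B-C\|_F^2\mid\tilde A,\tilde B]\leq\|\tilde A\|_F^2\|\tilde B\|_F^2/s$.

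Finally, triangulate the total error,
\[
\|AB-C\|_F \;\leq\; \|(A-\tilde A)B\|_F \;+\; \|\tilde A(B-\tilde B)\|_F \;+\; \|\tilde A\tilde B-C\|_F,
\]
and bound each term in expectation using $\|XY\|_F\leq\|X\|_F\|Y\|_F$, Jensen's inequality, and the independence of $\tilde A$ and $\tilde B$. The first is at most $\sqrt{\E\|A-\tilde A\|_F^2}\,\|B\|_F\leq\epsilon\|A\|_F\|B\|_F$. The second factors as $\E\|\tilde A\|_F\cdot\E\|B-\tilde B\|_F\leq \sqrt{\E\|\tilde A\|_F^2}\cdot\sqrt{\E\|B-\tilde B\|_F^2}\leq\sqrt{1+\epsilon^2}\,\|A\|_F\cdot\epsilon\|B\|_F$, using $\E\|\tilde A\|_F^2=\|A\|_F^2+\E\|A-\tilde A\|_F^2\leq(1+\epsilon^2)\|A\|_F^2$. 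The third, unconditioned using independence, is $\E\|\tilde A\|_F\cdot\E\|\tilde B\|_F/\sqrt{s}=O(\epsilon)\|A\|_F\|B\|_F$. Adding the three and rescaling $\epsilon$ by a constant gives the claimed $\E\|AB-C\|_F\leq\epsilon\|A\|_F\|B\|_F$.

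The main obstacle is holding the sparsification phase to $O(\nnz(A)+\nnz(B))$ time rather than the naive $O(n(k_A+k_B))$: this is what the truncation $\min(k,\text{nonzeros})$ per column/row accomplishes, and it preserves the variance bound since truncating either returns the exact column (zero error) or uses at most $\|A^j\|_0$ samples whose variance is still at most $\|A^j\|_1^2/k_A$. A secondary subtlety is that the DKM sampling probabilities depend on the random $\tilde A,\tilde B$, so the AMM error must be bounded conditionally and then deconditioned; this is clean because the conditional bound has the product form $\|\tilde A\|_F\|\tilde B\|_F/\sqrt{s}$, which factors across the independent sparsifier randomness.
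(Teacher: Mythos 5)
Your proposal is correct and follows essentially the same approach as the paper's proof: $\ell_1$-sparsify the columns of $A$ and rows of $B$ (which is exactly what the paper's Lemma~\ref{lem:approxOuterProduct} does, just organized per outer product rather than per matrix), then feed the sparse matrices into Drineas--Kannan--Mahoney column/row sampling, and control the total error by Jensen, Cauchy--Schwarz, and independence. The only cosmetic difference is the triangle-inequality split: the paper bounds $\|AB - \hat A\hat B\|_F$ as a sum of per-outer-product errors, while you factor it as $\|(A-\tilde A)B\|_F + \|\tilde A(B-\tilde B)\|_F$ via submultiplicativity; both yield the same $O(\epsilon)\|A\|_F\|B\|_F$ bound, and your explicit truncation at $\min(k_A,\|A^j\|_0)$ is a reasonable way to make the $O(\nnz(A))$ sparsification time unambiguous.
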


Approximate Matrix Multiplication (AMM) is a fundamental problem in numerical linear algebra with a long line of formative work \citep{frieze2004fast,drineas2006fast,clarkson2009numerical,magen2011low,cohen2015optimal,ye2016frequent,mroueh2017co} 
and many others. These results fall into roughly three categories; sampling based methods, random projection based methods and a mixture of sampling and projection based methods. We focus on sampling based methods in our work. 

There are two main error regimes considered in the literature: spectral-norm error and Frobenius-norm error.  We focus on the results of \cite{magen2011low} for spectral-norm error and \cite{drineas2006fast} for Frobenius-norm error.
Sampling based methods, including that of \cite{drineas2006fast,magen2011low}, propose sampling schemes that are linear time or nearly-linear time: specifically, they write the product of two matrices as the sum of $n$ outer products $AB=\sum_{i \in [n]}A^iB_i$, and then sample and compute each outer product $A^iB_i/p_i$ with probability $p_i \propto \|A^i\|_2\|B_i\|_2$. Computing each of these rank-1 outer products takes time bounded by $O(\csp(A)\rsp(B))$. This estimator is repeated sufficiently many times depending on the error regime under consideration. 

Our entrywise-sampling scheme compounds well with this framework for approximate matrix multiplication by additionally sampling entries from the rows/columns sampled by the AMM algorithm. We essentially replace the $\csp(A)\rsp(B)$ term with $\ns(A)\ns(B)$, up to $\tilde{O}(\poly(1/\epsilon))$ factors, for both Frobenius-norm and spectral-norm error regimes. 
It is plausible that the dependence on epsilon can be improved.

\paragraph{Application \RNum{2}: Approximate Ridge Regression.}
Given a matrix $A\in\R^{m\times n}$, a vector $b \in \R^m$ and a parameter $\lambda > 0$, the goal is to find a vector $x \in \R^n$ that minimizes $\|Ax-b\|_2^2+\lambda\|x\|_2^2$. This problem is $\lambda$-strongly convex, has solution $x^*=(A^\top A + \lambda I)^{-1}A^\top b$ and condition number $\kappa_\lambda(A^\top A) \eqdef {\|A\|_2^2}/{\lambda}$. 

Given an initial vector $x_0\in\mathbb{R}^n$ and a parameter $\epsilon > 0$, an $\epsilon$-approximate solution to the ridge regression problem is a vector $\hat{x} \in \R^n$ satisfying $\|\hat{x}-x^*\|_{A^\top A+\lambda I}\leq \epsilon\|x_0-x^*\|_{A^\top A+\lambda I}$,
where we write $\|x\|_M\eqdef x^\top M x$ when $M$ is a PSD matrix. 
We provide algorithms 
for approximate ridge regression by using our sparsification scheme as a preconditioner for known linear-system solvers in composition with a black-box acceleration framework by \cite{frostig2015regularizing}.
The following theorem is proved in 
\iffull Section \ref{sec:PreconRegProblem}. \else the full version. \fi

\begin{restatable}{theorem}{generalRowNormsRR}\label{thm:mainRidgeRegression}
There exists an algorithm that,
given $A\in \R^{m \times n},x_0\in\R^n,\lambda>0$ and $\epsilon>0$, 
computes with high probability an $\epsilon$-approximate solution to the ridge regression problem
in time 
\ifaistats $
O_\epsilon(\nnz(A)) + \tilde{O}_\epsilon\left((\nnz(A))^{2/3}(\ns(A) \sr(A))^{1/3}\sqrt{\kappa_\lambda(A^\top A)}\right).
$ 
\else $$
O_\epsilon(\nnz(A)) + \tilde{O}_\epsilon\left((\nnz(A))^{2/3}(\ns(A) \sr(A))^{1/3}\sqrt{\kappa_\lambda(A^\top A)}\right).
$$ \fi
\end{restatable}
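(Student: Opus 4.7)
The plan is to deploy our sparsifier as a sparse preconditioner inside an accelerated iterative ridge-regression solver. I would analyze the algorithm in three stages.

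First, I would apply Theorem~\ref{thm:spectralSparsifier} with a parameter $\epsilon'>0$ (to be chosen later) to produce a sparsifier $\tilde A$ with $\|A-\tilde A\|_2\le \epsilon'\|A\|_2$ and expected sparsity $\nnz(\tilde A)=\tilde O(\epsilon'^{-2}\ns(A)\sr(A)+\epsilon'^{-1}\sqrt{\ns(A)\sr(A)n})$, in time $O(\nnz(A))$ after a preliminary power-method estimate of $\|A\|_2$. Expanding $A^\top A-\tilde A^\top\tilde A=(A-\tilde A)^\top A+\tilde A^\top(A-\tilde A)$ and using $\|\tilde A\|_2\le (1+\epsilon')\|A\|_2$ yields $\|A^\top A-\tilde A^\top\tilde A\|_2=O(\epsilon'\|A\|_2^2)$. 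Combined with the trivial bound $\lambda I\preceq M$, where $M\eqdef A^\top A+\lambda I$, this shows that $\tilde M\eqdef \tilde A^\top\tilde A+\lambda I$ is a relative spectral preconditioner for $M$ with multiplicative distortion $1+O(\epsilon'\kappa_\lambda(A^\top A))$.

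Second, I would plug $\tilde M$ into the black-box accelerated proximal-point framework of~\cite{frostig2015regularizing}. That framework reduces the ridge-regression problem to $\tilde O(\sqrt{\kappa_\lambda(A^\top A)})$ inexact sub-problems, each a regularized least-squares instance on $A$ that can be handled by preconditioned iterative methods (e.g., SVRG or preconditioned conjugate gradient). The per-iteration cost is proportional to $\nnz(A)+\nnz(\tilde A)$, for applying $A$ and $A^\top$ and for a preconditioner solve against $\tilde M$, respectively.

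Third, I would choose $\epsilon'$ to balance these two contributions. Since $\nnz(\tilde A)$ scales roughly as $\epsilon'^{-2}\ns(A)\sr(A)$ whereas the $\nnz(A)$ term does not depend on $\epsilon'$, an AM--GM-style optimization picks $\epsilon'$ so that the preconditioner cost matches a specific power of $\nnz(A)$; this yields the $(\nnz(A))^{2/3}(\ns(A)\sr(A))^{1/3}$ factor in the stated bound, multiplied by the $\sqrt{\kappa_\lambda(A^\top A)}$ coming from the Frostig acceleration. An additive $O_\epsilon(\nnz(A))$ accounts for the sparsification step and the initial application of $A$.

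The main obstacle is the careful composition of preconditioner quality with the inner-solver convergence rate: the parameter $\epsilon'$ simultaneously controls the sparsity of $\tilde A$ and the spectral quality of $\tilde M$ relative to $M$. One must verify that the Frostig reduction tolerates an inexact preconditioner at this level of fidelity without losing the accelerated $\sqrt{\kappa_\lambda(A^\top A)}$ rate, and that the inner solver truly achieves per-iteration cost linear in $\nnz(A)+\nnz(\tilde A)$ under the multiplicative preconditioner guarantee; once these are in place, substituting the balanced value of $\epsilon'$ into the sparsity bound of Theorem~\ref{thm:spectralSparsifier} gives the stated running time.
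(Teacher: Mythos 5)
Your high-level architecture — build a sparse matrix $\tilde A$ from Theorem~\ref{thm:spectralSparsifier}, use $\tilde M=\tilde A^\top\tilde A+\lambda I$ as a preconditioner, and wrap the whole thing in the Frostig et al.\ acceleration — matches the paper. But the central step, where you argue the preconditioner quality, is precisely where the paper does something sharper, and the gap is not cosmetic: it changes the exponents in the final bound.

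You bound $\|A^\top A-\tilde A^\top\tilde A\|_2=O(\epsilon'\|A\|_2^2)$ by expanding the difference, then divide by the lower bound $\lambda I\preceq M$ to conclude a multiplicative distortion of $1+O(\epsilon'\kappa_\lambda)$. To make that $O(1)$ you need $\epsilon'=O(1/\kappa_\lambda)=O(\lambda/\|A\|_2^2)$. The paper's Lemma~\ref{lem:pIsGoodPreconditioner} instead works directly with the quadratic forms: writing $\|Px\|_2\le\|Ax\|_2+\|(P-A)x\|_2\le\|Ax\|_2+\sqrt{\lambda}\,\epsilon'\|x\|_2$ and squaring with AM--GM produces $\|Px\|_2^2+\lambda\|x\|_2^2\le(1+2\epsilon')(\|Ax\|_2^2+\lambda\|x\|_2^2)$ already when $\|A-P\|_2\le\sqrt{\lambda}\,\epsilon'$, i.e.\ a sparsification parameter of $O(\sqrt{\lambda}/\|A\|_2)=O(1/\sqrt{\kappa_\lambda})$ suffices — a factor of $\sqrt{\kappa_\lambda}$ looser. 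The paper explicitly remarks that the weaker requirement $\epsilon'=O(\lambda/\|A\|_2^2)$ is what~\cite{gupta2018exploiting} needed and that Lemma~\ref{lem:pIsGoodPreconditioner} improves it; you have re-derived the weaker bound. Since $\nnz(\tilde A)$ scales as $\epsilon'^{-2}\ns(A)\sr(A)$, your preconditioner is denser by a factor of $\kappa_{\lambda'}$ in the inner problem. If you carry your bound through the same balancing, the CG cost per inner solve becomes $\kappa_{\lambda'}^{5/2}\ns(A)\sr(A)$ rather than $\kappa_{\lambda'}^{3/2}\ns(A)\sr(A)$, and optimizing $\lambda'$ against $\nnz(A)$ gives roughly $\nnz(A)^{4/5}(\ns(A)\sr(A))^{1/5}\sqrt{\kappa_\lambda}$ instead of the claimed $\nnz(A)^{2/3}(\ns(A)\sr(A))^{1/3}\sqrt{\kappa_\lambda}$.

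Two smaller but real issues. First, the per-iteration cost is not $\nnz(A)+\nnz(\tilde A)$: applying $\tilde M^{-1}$ is itself a linear solve, and the paper does it with Conjugate Gradient at cost $O(\nnz(\tilde A)\sqrt{\kappa_\lambda(\tilde A^\top\tilde A)})$ per solve, not $O(\nnz(\tilde A))$. Second, the free knob being optimized is the Frostig inner-regularization $\lambda'$, not $\epsilon'$ directly: the acceleration produces $\sqrt{\lambda'/\lambda}$ inexact subproblems each regularized by $\lambda'$, and $\epsilon'$ is then forced to be $\Theta(\sqrt{\lambda'}/\|A\|_2)$ (by Lemma~\ref{lem:pIsGoodPreconditioner}) so that $\tilde M$ is an $O(1)$-distortion preconditioner for the $\lambda'$-regularized subproblem; the balancing is over $\lambda'$, with $\lambda'=\|A\|_2^2(\ns(A)\sr(A)/\nnz(A))^{2/3}$ matching the $\nnz(A)$ term against the CG term. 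Your write-up treats $\epsilon'$ as independently tunable, which it is not once the preconditioner-quality constraint is in place.
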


Moreover, when the input matrix $A$ has uniform column (or row) norms, the running time in Theorem \ref{thm:mainRidgeRegression} can be reduced by a factor of roughly $(\sr(A)/n)^{1/6}$, 
for details see
\iffull Section \ref{sec:RRUnifromNorms}. \else the full version. \fi

Solving linear systems using preconditioning has a rich history that is beyond the scope of this work to summarize. Recently, the work of \cite{gupta2018exploiting} designed algorithms with improved running times over popular methods using the Stochastic Variance Reduced Gradient Descent (SVRG) framework of \cite{johnson2013accelerating}. They adapt it using efficient subroutines for numerically sparse matrices. They also suggested the idea of using spectral-norm sparsifiers as preconditioners for linear regression.
While they considered the sparsification of \cite{achlioptas2013near} for computing the preconditioners, they required a stronger bound on the spectral-norm approximation than Theorem \ref{thm:mainRidgeRegression} does. 

Our result is in general incomparable to that of \cite{gupta2018exploiting}. In the case when the input has uniform column (or row) norms, our running time is roughly an $(\ns(A)/n)^{1/6}$-factor smaller than theirs, for details see 
\iffull Theorem~\ref{thm:uniformRowNormsRR} in Section~\ref{sec:RRUnifromNorms}. 
\else the full version. 
\fi

Very recently, \cite{carmon2020coordinate} have developed, independently of our work and as part of a suite of results on bilinear minimax problems, an algorithm for ridge regression with improved running time $\tilde{O}(\nnz(A) + \sqrt{\nnz(A) \ns(A) \sr(A) \kappa_\lambda(A^\top A)})$. 
Their approach is different and their techniques are more involved than ours.

\section{\ifaistats SPECTRAL-NORM SPARSIFICATION \else Spectral-Norm Sparsification \fi}
In this section we state and prove our main results. We first prove the upper bound in Theorem \ref{thm:spectralSparsifier}. 
Then we analyze $\ell_1$ sampling from the rows in Theorem \ref{thm:sparsifierForAMM}, Section \ref{sec:SecondSampling} that gives a slightly weaker bound but has the property that the resulting matrix has uniform row sparsity. 
In Section \ref{sec:lowerBound}, we prove the lower bound in Theorem \ref{thm:lowerBound}.

\sparsifier*

Before we prove Theorem \ref{thm:spectralSparsifier}, we start by stating a result on the concentration of sums of independent random matrices; the Matrix Bernstein Inequality.

\begin{theorem}[Matrix Bernstein, Theorem 1.6 of \cite{tropp2012user}]\label{thm:matrixBernstein}
Consider a finite sequence $\{Z_k\}$ of independent, random $d_1\times d_2$ real matrices, such that there is $R>0$ satisfying $\E Z_k =0$ and $\|Z_k\|_2 \leq R$ almost surely. Define
\[
  \sigma^2 
  = \max \Big\{\Big\| \sum_k \E(Z_k Z_k^\top)\Big\|_2, \Big\|\sum_k \E(Z_k^\top Z_k ) \Big\|_2 \Big\}.
  \]
Then for all $t\geq 0$,
\[
\mathbb{P} \Big(\Big\|\sum_k Z_k \Big\|_2\geq t \Big)\leq
(d_1 + d_2 ) \exp\Big(\frac{-t^2/2}{\sigma^2+Rt/3}\Big).
\]
\end{theorem}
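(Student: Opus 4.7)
The plan is to follow Tropp's matrix Laplace-transform approach and reduce the rectangular case to a Hermitian one via dilation. For each $d_1 \times d_2$ matrix $Z$, define the Hermitian dilation $\mathcal{H}(Z)$ to be the $(d_1+d_2) \times (d_1+d_2)$ block matrix with $Z$ in the upper-right block, $Z^\top$ in the lower-left block, and zeros elsewhere. This map is linear; its eigenvalues are $\pm \sigma_i(Z)$ together with zeros, so $\lambda_{\max}(\mathcal{H}(Z)) = \|Z\|_2$, and moreover $\mathcal{H}(Z)^2 = \mathrm{diag}(ZZ^\top, Z^\top Z)$. Setting $W_k := \mathcal{H}(Z_k)$ and $S := \sum_k W_k$, I obtain independent zero-mean Hermitian matrices of dimension $d := d_1 + d_2$ with $\|W_k\|_2 \leq R$ and $\bigl\| \sum_k \E W_k^2 \bigr\|_2 = \sigma^2$. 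Since $\bigl\| \sum_k Z_k \bigr\|_2 = \lambda_{\max}(S)$, it suffices to control $\mathbb{P}(\lambda_{\max}(S) \geq t)$.

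Next I apply the matrix Markov inequality: for any $\theta > 0$,
$$\mathbb{P}(\lambda_{\max}(S) \geq t) \leq e^{-\theta t}\, \E \operatorname{tr} e^{\theta S}.$$
The crucial analytic step is to reduce this to the individual summands. Invoking Lieb's concavity theorem (applied to the concave functional $A \mapsto \operatorname{tr} \exp(H + \log A)$) and Jensen's inequality, one obtains the subadditivity of matrix cumulant generating functions:
$$\E \operatorname{tr} e^{\theta S} \leq \operatorname{tr} \exp\Bigl( \sum_k \log \E e^{\theta W_k} \Bigr).$$
This is where independence enters, and is the non-commutative substitute for the scalar factorization $\E e^{\theta \sum X_k} = \prod_k \E e^{\theta X_k}$.

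Now I would bound each matrix CGF in the Loewner order. Using $\E W_k = 0$, $\|W_k\|_2 \leq R$, and the eigenvalue-wise operator inequality $W_k^p \preceq R^{p-2} W_k^2$ for all integers $p \geq 2$, expand the matrix exponential and take expectations to get, for $0 < \theta < 3/R$,
$$\E e^{\theta W_k} \preceq I + \frac{\theta^2/2}{1 - R\theta/3}\, \E W_k^2 \preceq \exp\Bigl( \frac{\theta^2/2}{1 - R\theta/3}\, \E W_k^2 \Bigr),$$
where the last step uses $I + A \preceq e^A$ for PSD $A$. Taking matrix logarithms, summing over $k$, and substituting into the subadditivity bound, while using monotonicity of $\operatorname{tr} \exp$ in the Loewner order and the definition of $\sigma^2$, gives
$$\E \operatorname{tr} e^{\theta S} \leq d \cdot \exp\Bigl( \frac{\theta^2 \sigma^2 / 2}{1 - R\theta/3} \Bigr).$$
Finally, optimizing the Markov bound by choosing $\theta = t/(\sigma^2 + Rt/3)$ produces the Bernstein-type tail
$$\mathbb{P}(\lambda_{\max}(S) \geq t) \leq d \cdot \exp\Bigl( \frac{-t^2/2}{\sigma^2 + Rt/3} \Bigr),$$
which is the claimed inequality. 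The main obstacle is the subadditivity step via Lieb's theorem: once this deep tool is in hand, the remainder of the argument is essentially a Loewner-order lift of the classical scalar Bernstein calculation.
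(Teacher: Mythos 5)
Your proposal is correct: the dilation, matrix Laplace transform, Lieb-based subadditivity of matrix CGFs, and the Bernstein moment-generating-function bound with $\theta = t/(\sigma^2 + Rt/3)$ reproduce exactly the argument of Tropp (2012), which is the source the paper cites for this theorem without reproving it. Since the paper imports the result as a black box, there is nothing to compare beyond noting that your route is the standard (and the cited) one, and all the key steps you state (the operator inequality $W_k^p \preceq R^{p-2} W_k^2$, the factorial bound giving $\tfrac{\theta^2/2}{1-R\theta/3}$, and the identification of $\sigma^2$ through $\mathcal{H}(Z)^2 = \mathrm{diag}(ZZ^\top, Z^\top Z)$) are accurate.
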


\ifcolt
\begin{proof}\textbf{of Theorem~\ref{thm:spectralSparsifier}. }
\else%
\begin{proof}[Proof of Theorem~\ref{thm:spectralSparsifier}. ]
\fi
Let $\epsilon>0$. 
Given a matrix $A$, define sampling probabilities as follows. 
\begin{empheq}[box=\widefbox]{align*}
p_{ij}^{(1)}&=\frac{|A_{ij}|}{\sum_{i'j'}|A_{i'j'}|}\\
p_{ij}^{(2)}&=\frac{\|A_i\|_1^2}{\sum_{i'}\|A_{i'}\|_1^2}\cdot\frac{|A_{ij}|}{\|A_i\|_1}\\
p_{ij}^{(3)}&=\frac{\|A^j\|_1^2}{\sum_{j'}\|A^{j'}\|_1^2}\cdot\frac{|A_{ij}|}{\|A^j\|_1}\\
p^*_{ij}&=\max_\alpha (p_{ij}^{(\alpha)}) .
\end{empheq}
Observe that each $\alpha=1,2,3$ yields a probability distribution
because $\sum_{ij} p_{ij}^{(\alpha)}=1$. 

Let $s<mn$ be a parameter that we will choose later.
Now sample each entry of $A$ independently 
and scale it to get an unbiased estimator,
i.e., compute $\tilde{A}$ by 
\begin{empheq}[box=\widefbox]{align*}
  \tilde{A}_{ij}=\begin{cases}
  \frac{A_{ij}}{p_{ij}} 
  & \text{with prob.\ $p_{ij}=\min(1,s\cdot p^*_{ij})$; } \\
  0 
  & \text{otherwise.}
  \end{cases}
\end{empheq}
To bound the expected sparsity, 
observe that $p^*_{ij}\leq\sum_\alpha p_{ij}^{(\alpha)}$,
and thus
\[
  \E[\nnz(\tilde{A})] 
  =    \sum_{ij} p_{ij} 
  \leq s \sum_{ij} \sum_\alpha p_{ij}^{(\alpha)}
  \leq 3s.
\]
We show that each of the above distributions bounds one of the terms in matrix Bernstein bound.
For each pair of indices $(i,j)$ define a matrix $Z_{ij}$ that has a single non-zero at the $(i,j)$ entry, with value $\tilde{A}_{ij}-A_{ij}$. 
Its spectral-norm is $\|Z_{ij}\|_2=|\tilde{A}_{ij}-A_{ij}|$. If $p_{ij}=1$, this is 0. If $p_{ij}<1$ then
\ifaistats
\begin{align*}
  &|\tilde{A}_{ij}-A_{ij}|
  \leq |A_{ij}|\max(1,\frac{1}{p_{ij}}-1)\\
  &\leq \frac{|A_{ij}|}{p_{ij}}
  \leq \frac{|A_{ij}|}{sp_{ij}^{(1)}}
  = \frac{1}{s}\sum_{i'j'}|A_{i'j'}|
  \\
  &\leq \frac{\sqrt{\ns(A)}}{s}\sum_{j}\|A^{j}\|_2
  \leq \frac{\sqrt{\ns(A)n}}{s}\|A\|_F
  \eqqcolon R,
\end{align*}
\else
\begin{align*}
  |\tilde{A}_{ij}-A_{ij}|
  &\leq |A_{ij}|\max(1,\frac{1}{p_{ij}}-1)\\
  &\leq \frac{|A_{ij}|}{p_{ij}}
  \leq \frac{|A_{ij}|}{sp_{ij}^{(1)}}
  = \frac{1}{s}\sum_{i'j'}|A_{i'j'}|
  \\
  &\leq \frac{\sqrt{\ns(A)}}{s}\sum_{j}\|A^{j}\|_2
  \leq \frac{\sqrt{\ns(A)n}}{s}\|A\|_F
  \eqqcolon R,
\end{align*}
\fi
where the last inequality follows from Cauchy-Schwarz inequality.

In order to bound $\sigma^2$, first notice that $\var(\tilde{A}_{ij})\leq\E(\tilde{A}_{ij}^2)=\frac{A_{ij}^2}{sp^*_{ij}}$.
Now, since $Z_{ij}Z_{ij}^\top$ has a single non-zero entry at $(i,i)$, and $Z_{ij}^\top Z_{ij}$ has a single non-zero entry at $(j,j)$, both $\sum_{i,j}Z_{ij}Z_{ij}^\top$ and $\sum_{i,j}Z_{ij}^\top Z_{ij}$ are diagonal, where the $(i,i)$ entry is $\sum_j (\tilde{A}_{ij}-A_{ij})^2$ in the former and the $(j,j)$ entry is $\sum_i (\tilde{A}_{ij}-A_{ij})^2$ in the latter. Since these are diagonal matrices, their spectral-norm equals their largest absolute entry, and thus 
\ifaistats
\begin{align*}
&\Big\|\sum_{i,j}\E\big(Z_{ij}Z_{ij}^\top\big)\Big\|_2 
    \leq\max_i\Big(\sum_j\frac{A_{ij}^2}{sp^*_{ij}}\Big)
    \\    
    &\leq\max_i\Big(\sum_j\frac{A_{ij}^2}{sp^{(2)}_{ij}}\Big)=\frac{1}{s}\max_i\Big(\sum_j\frac{|A_{ij}|\sum_{i'}\|A_{i'}\|_1^2}{\|A_i\|_1}\Big)
    \\
    &=\frac{1}{s}\sum_{i'}\|A_{i'}\|_1^2\leq\frac{1}{s}\sum_{i'}\ns(A)\|A_{i'}\|_2^2=\frac{\ns(A)}{s}\|A\|_F^2.
\end{align*}
\else
\begin{align*}
\Big\|\sum_{i,j}\E\big(Z_{ij}Z_{ij}^\top\big)\Big\|_2 
    & \leq\max_i\Big(\sum_j\frac{A_{ij}^2}{sp^*_{ij}}\Big)
    \leq\max_i\Big(\sum_j\frac{A_{ij}^2}{sp^{(2)}_{ij}}\Big)
    \\
    &=\frac{1}{s}\max_i\Big(\sum_j\frac{|A_{ij}|\sum_{i'}\|A_{i'}\|_1^2}{\|A_i\|_1}\Big)
    =\frac{1}{s}\sum_{i'}\|A_{i'}\|_1^2
    \\
    &\leq\frac{1}{s}\sum_{i'}\ns(A)\|A_{i'}\|_2^2=\frac{\ns(A)}{s}\|A\|_F^2.
\end{align*}
\fi

The same bound can be shown for $\sum_{i,j}\E (Z_{ij}^\top Z_{ij})$ by using ${p^*_{ij}} \geq {p^{(3)}_{ij}}$, thus by the definition of $\sigma^2$, $\sigma^2\leq\frac{\ns(A)}{s}\|A\|_F^2$. Finally, by the matrix-Bernstein bound,
\ifaistats \begin{align*}
&\mathbb{P}\big(\big\|\sum_{i,j}Z_{ij}\big\|_2 \geq\epsilon\|A\|_2\big) \\ 
&\leq 2m \exp\bigg(-\frac{\epsilon^2\|A\|_2^2/2}{\frac{\ns(A)}{s}\|A\|_F^2+\epsilon\frac{\sqrt{\ns(A)n}}{s}\|A\|_F\|A\|_2/3}\bigg),
\end{align*}
\else 
\begin{align*}
\mathbb{P}\Big(\Big\|\sum_{i,j}Z_{ij}\Big\|_2 \geq\epsilon\|A\|_2\Big) 
\leq 2m \exp\bigg(-\frac{\epsilon^2\|A\|_2^2/2}{\frac{\ns(A)}{s}\|A\|_F^2+\epsilon\frac{\sqrt{\ns(A)n}}{s}\|A\|_F\|A\|_2/3}\bigg),
\end{align*}
\fi
and since $\sr(A)=\frac{\|A\|_F^2}{\|A\|_2^2}$, by setting $s=O(\epsilon^{-2}\ns(A) \sr(A)\log m+\epsilon^{-1}\sqrt{\ns(A)\cdot n \cdot \sr(A)}\log m)$ 
we conclude that with high probability $\|\tilde{A}-A\|_2\leq\epsilon\|A\|_2$, which completes the proof of Theorem \ref{thm:spectralSparsifier}.
\end{proof}

\subsection{A Second Sampling Scheme}
\label{sec:SecondSampling} 
We analyze $\ell_1$ row sampling, i.e. sampling entry $(i,j)$ with probability $\frac{|A_{ij}|}{\|A_i\|_1}$, as was similarly done for numerically sparse matrices in \cite{gupta2018exploiting}, although they employed this sampling (i) in a different setting and (ii) on one row at a time. Here, we analyze how to employ this sampling on all the rows simultaneously for $\epsilon$-spectral-norm sparsification.
This sampling is inferior to the one in Theorem~\ref{thm:spectralSparsifier} in terms of $\nnz(\tilde{A})$, but has the additional property that the sparsity of every row is bounded. 
By applying this scheme to $A^\top$, we can alternatively obtain an $\epsilon$-spectral-norm sparsifier where the sparsity of every column is bounded.

\begin{theorem}\label{thm:sparsifierForAMM}
There is an algorithm that, given a matrix $A\in\mathbb{R}^{m\times n}$ and a parameter $\epsilon>0$, 
computes in time $O(\nnz(A))$ with high probability an $\epsilon$-spectral-norm sparsifier $\tilde{A}$ for $A$ such that the sparsity of every row of $\tilde{A}$ is bounded by $O(\epsilon^{-2}\ns(A)\log (m+n))$. 
\end{theorem}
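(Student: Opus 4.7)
The plan is to apply the matrix Bernstein inequality (Theorem~\ref{thm:matrixBernstein}) to the $\ell_1$ row sampling scheme, following the template of the proof of Theorem~\ref{thm:spectralSparsifier} but using only the single row-based distribution $p_{ij} = \min(1, s \cdot |A_{ij}|/\|A_i\|_1)$, where $s = \Theta(\epsilon^{-2}\ns(A)\log(m+n))$ serves as a per-row sample budget. Setting $\tilde{A}_{ij} = A_{ij}/p_{ij}$ with probability $p_{ij}$ (and $0$ otherwise) makes $\tilde{A}$ an unbiased estimator of $A$. Writing $\tilde{A}-A = \sum_{ij} Z_{ij}$ with each $Z_{ij}$ supported on a single entry, the almost-sure bound is immediate: $\|Z_{ij}\|_2 \leq |A_{ij}|/p_{ij} \leq \|A_i\|_1/s \leq \sqrt{\ns(A)}\,\|A\|_2/s =: R$, using $\|A_i\|_2 \leq \|A\|_2$.

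The main obstacle is bounding the two variance terms required by matrix Bernstein, particularly the column-side one, since the sampling distribution is asymmetric between rows and columns. The row-side sum $\sum_{ij}\E(Z_{ij}Z_{ij}^\top)$ is diagonal, with $(i,i)$ entry at most $\sum_j A_{ij}^2/p_{ij} \leq \|A_i\|_1^2/s \leq \ns(A)\|A\|_2^2/s$, exactly as in Theorem~\ref{thm:spectralSparsifier}. The column-side sum $\sum_{ij}\E(Z_{ij}^\top Z_{ij})$ is also diagonal, with $(j,j)$ entry $\sum_i A_{ij}^2/p_{ij} \leq \sum_i |A_{ij}|\,\|A_i\|_1/s$, and a naive Cauchy--Schwarz application here yields only $\sqrt{\ns(A)\sr(A)}\,\|A\|_2^2/s$, which is too weak. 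The key trick is instead to bound $\sum_i |A_{ij}|\,\|A_i\|_1 \leq (\max_i \|A_i\|_1)\cdot\|A^j\|_1$ and apply the numerical-sparsity inequality to \emph{both} factors: $\max_i \|A_i\|_1 \leq \sqrt{\ns(A)}\,\|A\|_2$ and $\|A^j\|_1 \leq \sqrt{\ns(A)}\,\|A\|_2$, using $\|A_i\|_2,\|A^j\|_2 \leq \|A\|_2$. This makes the column variance also at most $\ns(A)\|A\|_2^2/s$, so $\sigma^2 \leq \ns(A)\|A\|_2^2/s$, and matrix Bernstein then yields $\|\tilde{A}-A\|_2 \leq \epsilon\|A\|_2$ with high probability for the chosen $s$.

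Finally, for the row-sparsity guarantee, the number of sampled entries in any fixed row $i$ is a sum of independent Bernoullis with mean at most $s$, so a standard Chernoff bound plus a union bound over the $m$ rows shows that every row of $\tilde{A}$ has $O(s) = O(\epsilon^{-2}\ns(A)\log(m+n))$ nonzeros with high probability. For running time, a single linear pass over $A$ computes the row norms $\|A_i\|_1$ and the numerical sparsity $\ns(A)$; since $s$ depends only on $\ns(A)$ and $\epsilon$ and not on $\|A\|_2$, no spectral-norm estimate is required. A second pass samples and rescales the entries, giving total time $O(\nnz(A))$.
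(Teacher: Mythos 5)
Your proof is correct, and it takes a genuinely different decomposition than the paper's. You use independent Bernoulli sampling of each entry with probability $p_{ij} = \min(1, s\,|A_{ij}|/\|A_i\|_1)$ (mirroring the mechanism in the proof of Theorem~\ref{thm:spectralSparsifier}), write $\tilde{A}-A = \sum_{ij} Z_{ij}$ with rank-one $Z_{ij}$ supported on single entries, and apply matrix Bernstein to that entrywise sum. The paper instead draws exactly $s$ samples \emph{with replacement} from each row according to $|A_{ij}|/\|A_i\|_1$, averages them, and decomposes $s(A-\tilde{A}) = \sum_{i,t} Z_{(it)}$ where each $Z_{(it)} = e_i(A_i - Q_i^{(t)})$ is a one-row matrix. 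The substantive lemma is the same in both: the column-side variance term $\max_j \sum_i |A_{ij}|\|A_i\|_1$ is bounded via $\sum_i |A_{ij}|\|A_i\|_1 \leq (\max_i\|A_i\|_1)\|A^j\|_1 \leq \ns(A)\|A\|_2^2$, applying the numerical-sparsity inequality to both factors — you correctly identify that the naive Cauchy--Schwarz bound would lose a $\sqrt{\sr(A)}$ factor. What the paper's with-replacement scheme buys is a \emph{deterministic} bound of $s$ nonzeros per row, with no Chernoff or union bound needed; what your Bernoulli scheme buys is a uniform presentation with Theorem~\ref{thm:spectralSparsifier} (a single unified sampling template) at the cost of a routine Chernoff-plus-union argument for the per-row sparsity, which you correctly note succeeds because $s = \Omega(\log(m+n))$. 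Both reach the same $\sigma^2 \le \ns(A)\|A\|_2^2/s$ (equivalently, $\sigma^2 \le 2s\ns(A)\|A\|_2^2$ in the paper's normalization) and the same final sample complexity.
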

The algorithm is as follows.
Given a matrix $A$ and $\epsilon>0$, 
define the sampling probabilities 
\begin{empheq}[box=\widefbox]{align*}
p_{ij}=\frac{|A_{ij}|}{\|A_i\|_1} , 
\end{empheq}
and observe that for every $i$ this induces probability distribution, i.e., $\sum_j p_{ij}=1$. 
Let $s=O(\epsilon^{-2}\ns(A)\log (m+n))$. 
Now from each row of $A$ sample $s$ entries independently with replacement according to the above distribution, 
and scale it to get an unbiased estimator of that row;
formally, for each row $i$ and each $t=1,\ldots,s$ draw a row vector
\begin{empheq}[box=\widefbox]{align*}
  Q_i^{(t)} = \begin{cases}
  \frac{A_{ij}}{p_{ij}}e_j^\top 
  & \text{ with prob.\ $p_{ij}$,} 
  \end{cases}
\end{empheq}
where $\{e_j\}_j$ is the standard basis of $\mathbb{R}^n$. 
Next, average the $t$ samples for each row, and arrange these rows in a matrix $\tilde{A}$ that is an unbiased estimator for $A$; formally, 
\begin{empheq}[box=\widefbox]{align*}
  \tilde{A} = \sum_{i=1}^m e_i \frac{1}{s}\sum_{t=1}^s Q_i^{(t)}. 
\end{empheq}
Clearly $\E(\tilde{A})=A$ and every row of $\tilde{A}$ has at most $s$ non-zeros. 
In order to bound the probability that $\tilde{A}$ is an $\epsilon$-spectral-norm sparsifier of $A$, similarly to the proof of Theorem \ref{thm:spectralSparsifier}, we employ the matrix-Bernstein bound stated in Theorem~\ref{thm:matrixBernstein}. 
\iffull


\ifcolt
\begin{proof}\textbf{of Theorem \ref{thm:sparsifierForAMM}.}
\else
\begin{proof}[Proof of Theorem \ref{thm:sparsifierForAMM}.] 
\fi
\label{proof:secondSampling}
Given a matrix $A,\epsilon>0$, let $k=\ns(A)$ and apply the algorithm of Theorem \ref{thm:sparsifierForAMM}.
Note that by the definition of $\ns(A)$ and by spectral-norm properties, the $i$-th row of $A$ satisfies
\begin{equation}\label{eq:rowNormToSpectral}
  \|A_i\|_1\leq\sqrt{k}\|A_i\|_2\leq \sqrt{k}\|A\|_2.  
\end{equation}
For each random draw, define a matrix $Z_{(it)}$ with exactly one non-zero row 
formed by placing $A_i - Q_i^{(t)}$ at the $i$-th row; 
formally, let $Z_{(it)}=e_i (A_i - Q_i^{(t)})$. Where it is clear from context we will omit the superscript from $Q_i^{(t)}$.
The spectral-norm of $Z_{(it)}$ is 
\ifaistats
\begin{align*}
  \|Z_{(it)}\|_2
  = \|A_i - Q_i^{(t)}\|_2
  \leq \|A_i\|_2 + \|Q_i^{(t)}\|_2 \\
  = \|A_i\|_2 + \|A_i\|_1
  \leq 2\sqrt{k}\|A\|_2
  \eqqcolon R.
\end{align*}
\else
\[ \|Z_{(it)}\|_2
  = \|A_i - Q_i^{(t)}\|_2
  \leq \|A_i\|_2 + \|Q_i^{(t)}\|_2 \\
  = \|A_i\|_2 + \|A_i\|_1
  \leq 2\sqrt{k}\|A\|_2
  \eqqcolon R.\]
\fi

To bound $\sigma^2$, notice that $Z_{(it)} Z_{(it)}^\top$ has a single non-zero at the $(i,i)$ entry with value $\|A_i - Q_i^{(t)}\|_2^2$, hence 
\ifaistats 
    $\big\|\E \sum_{i,t} Z_{(it)} Z_{(it)}^\top\big\|_2 
      = s \max_i \E\|A_i - Q_i\|_2^2= s \max_i \E\|Q_i\|_2^2-\|A_i\|_2^2 \leq s \max_i \sum_j \|A_i\|_1\cdot|A_{ij}|
      \leq s k\|A\|_2^2$.
\else 
    \begin{align*}
      \big\|\E \sum_{i,t} Z_{(it)} Z_{(it)}^\top\big\|_2 
      &= s \max_i \E\|A_i - Q_i\|_2^2= s \max_i \E\|Q_i\|_2^2-\|A_i\|_2^2 \\ 
      &\leq s \max_i \sum_j \|A_i\|_1\cdot|A_{ij}|
      \leq s k\|A\|_2^2.
    \end{align*}
\fi

The other term $Z_{(it)}^\top Z_{(it)}$ satisfies $\E(Z_{(it)}^\top Z_{(it)}) = \E\big(Q_i^{\top} (Q_i - A_i)\big) = \E(Q_i^{\top} Q_i) - A_i^\top A_i$. The matrix $\E(Q_i^{\top} Q_i)$ is diagonal with value $|A_{ij}|\cdot\|A_i\|_1$ at the $(j,j)$ entry, hence
\ifaistats
\begin{align*}
    &\big\|\sum_{i,t} \E(Z_{(it)}^\top Z_{(it)})\big\|_2 = s\big\|\sum_{i} (\E(Q_i^{\top} Q_i) - A_i^\top A_i)\big\|_2\\
    & = s\|\sum_{i} \E(Q_i^{\top} Q_i) - A^\top A\|_2 \\
    & \leq s\big(\big\|\sum_{i} \E(Q_i^{\top} Q_i)\big\|_2 + \|A^\top A\|_2\big) \\
    & = s\big(\max_j \sum_{i} |A_{ij}|\cdot\|A_i\|_1 + \|A\|_2^2\big) \\
    & \leq s\sqrt{k}\big(\|A\|_2\max_j \sum_{i} |A_{ij}| + \|A\|_2^2\big)\\
    & = s\sqrt{k}\big(\|A\|_2\max_j \|A^{j}\|_1 + \|A\|_2^2\big)\leq 2s\cdot k\cdot \|A\|_2^2\eqqcolon \sigma^2.
\end{align*}
\else
\begin{align*}
    \big\|\sum_{i,t} \E(Z_{(it)}^\top Z_{(it)})\big\|_2 &= s\big\|\sum_{i} (\E(Q_i^{\top} Q_i) - A_i^\top A_i)\big\|_2\\
    & = s\|\sum_{i} \E(Q_i^{\top} Q_i) - A^\top A\|_2 \\
    & \leq s\big(\big\|\sum_{i} \E(Q_i^{\top} Q_i)\big\|_2 + \|A^\top A\|_2\big) \\
    & = s\big(\max_j \sum_{i} |A_{ij}|\cdot\|A_i\|_1 + \|A\|_2^2\big) \\
    & \leq s\sqrt{k}\big(\|A\|_2\max_j \sum_{i} |A_{ij}| + \|A\|_2^2\big)\\
    & = s\sqrt{k}\big(\|A\|_2\max_j \|A^{j}\|_1 + \|A\|_2^2\big)\leq 2s\cdot k\cdot \|A\|_2^2\eqqcolon \sigma^2.
\end{align*}
\fi
Now, by the matrix-Bernstein bound as stated in Theorem \ref{thm:matrixBernstein},
\ifaistats
\begin{align*}
  &\mathbb{P}(\|A-\tilde{A}\|_2\geq\epsilon\|A\|_2) = \mathbb{P}\big(\big\|\sum_{i,t}Z_{(it)}\big\|_2\geq s\epsilon\|A\|_2\big)  \\
  & \leq(m+n)\exp\Big(-\frac{s\epsilon^2\|A\|_2^2/2}{2k\|A\|_2^2+\tfrac{2\epsilon}{3}\sqrt{k}\|A\|_2^2}\Big),
\end{align*}
\else
\begin{align*}
  \mathbb{P}(\|A-\tilde{A}\|_2\geq\epsilon\|A\|_2) &= \mathbb{P}\big(\big\|\sum_{i,t}Z_{(it)}\big\|_2\geq s\epsilon\|A\|_2\big)  \\
  & \leq(m+n)\exp\Big(-\frac{s\epsilon^2\|A\|_2^2/2}{2k\|A\|_2^2+\tfrac{2\epsilon}{3}\sqrt{k}\|A\|_2^2}\Big),
\end{align*}
\fi

and by setting $s=O(\epsilon^{-2}k\log (m+n))$ we conclude that with high probability $\|\tilde{A}-A\|_2\leq\epsilon\|A\|_2$.
\end{proof}

\else
The proof is omitted here and appears in the full version.
\fi

\subsection{Lower Bounds}\label{sec:lowerBound}

We provide a lower bound in Theorem \ref{thm:lowerBound} for spectral-norm sparsification, which almost matches the bound in Theorem~\ref{thm:spectralSparsifier}
for a large range of $\epsilon$ and $\ns(A)$.  

\sparsifierLowerBound*

\begin{proof}
We shall assume that $k$ divides $n$, 
and that both are powers of $2$, 
which can be obtained with changing the bounds by a constant factor. 
Let $m=\frac{n}{k}$, and notice it is a power of 2 as well. 

Construct first a vector $a\in\mathbb{R}^m$ 
by concatenating blocks of length $2^i$ whose coordinates have value $2^{-(1+\alpha)i}$, for each $i\in\{0,...,\log m-1\}$, 
where $1>\alpha \geq \Omega(\log^{-1} m)$ is a parameter that we will set later.
The last remaining coordinate have value $0$.
Formally, the coordinates of $a$ are given by 
$a_j=2^{-(1+\alpha) \lfloor\log j\rfloor}$, except the last one which is $0$. 
\ifaistats
Its $\ell_1$ norm is $\|a\|_1
  = \sum_{j=1}^m a_j=\sum_{i=0}^{\log m-1} 2^i\cdot 2^{-(1+\alpha)i}
  = \frac{1-2^{-\alpha\log m}}{1-2^{-\alpha}}
  = \Theta(\alpha^{-1}).$
\else
Its $\ell_1$ norm is
\[
  \|a\|_1
  = \sum_{j=1}^m a_j=\sum_{i=0}^{\log m-1} 2^i\cdot 2^{-(1+\alpha)i}
  = \frac{1-2^{-\alpha\log m}}{1-2^{-\alpha}}
  = \Theta(\alpha^{-1}) .
\]
\fi
A similar computation shows that $\|a\|_2=\Theta(1)$, 
and thus $\ns(a) = \Theta(\alpha^{-2})$. 
Denote by $a_{\tail(c)}$ the vector $a$ without its $c$ largest entries,
then its $\ell_2$ norm is 
\begin{equation} \label{eq:c_tail}
  \|a_{\tail(c)}\|_2^2
  \geq \sum_{\mathclap{i=\lfloor\log c\rfloor+1}}^{\log m-1} 2^i\cdot 2^{-2(1+\alpha)i}
  = \Omega(c^{-(1+2\alpha)}), 
\end{equation}
which almost matches the upper bound of 
Lemma 3 in~\cite{gupta2018exploiting}.

Now, for $k=1$ we construct a circulant matrix $A\in\mathbb{R}^{m\times m}$ 
by letting the vector $a$ be its first row, and the $j$-th row is a cyclic shift of $a$ with offset $j$.
By well-known properties of circulant matrices, the $t$-th eigenvalue of $A$ is given by $\lambda_t=\sum_j a_j(\omega_t)^j$ where $\omega_t=\exp\brackets{i\frac{2\pi t}{m}}$ and $i$ is the imaginary unit, so $\|A\|_2=\|a\|_1=\Theta(\alpha^{-1})$. 
Consider $B\in\mathbb{R}^{m\times m}$ 
satisfying $\|A-B\|_2\leq \epsilon\|A\|_2$,
and suppose some row $B_j$ of $B$ has $s$ non-zeros. 
Then using~\eqref{eq:c_tail}, 
\[
  \|A-B\|_2
  \geq \|A_j-B_j\|_2
  \geq \|a_{\tail(s)}\|_2
  = \Omega(s^{-(\frac{1}{2}+\alpha)}). 
\]
By the error bound $\|A-B\|_2\leq \epsilon\|A\|_2$,
we must have 
$
  s 
  \geq (\Omega(\epsilon/\alpha))^{-\frac{2}{1+2\alpha}}
  \geq \Omega((\epsilon/\alpha)^{-\frac{2}{1+2\alpha}})$, 
which bounds from below the sparsity of every row,
and similarly also of every column, of $B$. 

To generalize this to larger numerical sparsity, 
consider as a first attempt constructing a vector $a'\in\mathbb{R}^n$ by concatenating $k$ copies of $a$. Then clearly $\ns(a')=\Theta(k \ns(a))$.
The circulant matrix of $a'$ is equivalent to $A\otimes C$, where $C$ is the all-ones matrix of dimension $k\times k$, and $\otimes$ is the Kronecker product. But this matrix has low rank, and thus might be easier to approximate. 
We thus construct a different matrix $A'=A\otimes H_k$, where $H_k$ is the $k\times k$ Hadamard matrix. Its numerical sparsity is the same as of the vector $a'$, thus $\ns(A')=\Theta(k \ns(a))$. 
The eigenvalues of $H_k$ are $\pm\sqrt{k}$. By properties of the Kronecker product, every eigenvalue of $A'$ is the product of an eigenvalue of $A$ with $\pm\sqrt{k}$, thus $\|A'\|_2=\Theta(\sqrt{k}\|A\|_2)=\Theta(\sqrt{k}\alpha^{-1})$. 
We now apply the same argument we made for $k=1$. 
Let $B'\in\mathbb{R}^{n\times n}$ be an $\epsilon$-spectral-norm sparsifier of $A'$. 
If some row $B'_j$ has $s$ non-zeros then using~\eqref{eq:c_tail}, 
\ifaistats
\begin{align*}
    \|A'-B'\|_2
  \geq \|A'_j-B'_j\|_2
  & \geq \|a'_{\tail(s)}\|_2 \\
  & = \Omega(\sqrt{k}(s/k)^{-(\frac{1}{2}+\alpha)}) .
\end{align*}
\else
\[
\|A'-B'\|_2
  \geq \|A'_j-B'_j\|_2
   \geq \|a'_{\tail(s)}\|_2 
   = \Omega(\sqrt{k}(s/k)^{-(\frac{1}{2}+\alpha)}) .
  \]
\fi

By the error bound $\|A'-B'\|_2\leq \epsilon\|A'\|_2$,
we must have $s\geq \Omega(k(\epsilon/\alpha)^{-\frac{2}{1+2\alpha}})$,
which bounds the sparsity of every row and every column of $B'$.

We can set $\alpha=\log^{-1}\tfrac{1}{\epsilon}>\epsilon$. Note that this choice for $\alpha$ is in the range $[\log^{-1} \tfrac{n}{k},1]$, hence the construction hold. Now since $\tfrac{1}{1+2\alpha}\geq 1-2\alpha$, the lower bound on the sparsity of each row and each column of $B'$ is 
$k(\epsilon/\alpha)^{-\tfrac{2}{1+2\alpha}}\geq k(\epsilon/\alpha)^{-2+4\alpha}\geq \Omega(k\epsilon^{-2}\log^{-2}\tfrac{1}{\epsilon})$.
\end{proof}


\section{\ifaistats APPLICATION I: APPROXIMATE MATRIX MULTPLICATION \else Application I: Approximate Matrix Multiplication \fi}\label{sec:AMM}
In this section, we show how to use 
$\ell_1$ row/column sampling
for fast approximate matrix multiplication (AMM). 
Given matrices $A\in\mathbb{R}^{m\times n},B\in\mathbb{R}^{n\times p}$ and error parameter $\epsilon>0$, 
the goal is to compute a matrix $C \in\mathbb{R}^{m\times p}$ such that $\|AB-C\|\leq \epsilon \|A\|\cdot\|B\|$, where the norm is usually either the Frobenius-norm $\|\cdot\|_F$ or spectral-norm $\|\cdot\|_2$.
We provide the first results on AMM for numerically sparse matrices with respect to both norms.

\AMMspectralNorm*

\AMMfrobeniusNorm*

The proofs of these theorems combine Theorem \ref{thm:sparsifierForAMM} with previous results on numerical sparsity and with previous results on AMM.

\begin{lemma}[Lemma 4 of \cite{gupta2018exploiting}]\label{lem:vectorEstimation_GS18}
Given a vector $a\in\mathbb{R}^n$ and a parameter $\epsilon>0$, independently sampling $(\epsilon^{-2}\ns(a))$ entries according to the distribution $\{p_i=\frac{|a_i|}{\|a\|_1}\}_i$ and re-weighting the sampled coordinates by $\frac{1}{p_i}\cdot \frac{1}{\epsilon^{-2}\ns(a)}$, outputs
a $(\epsilon^{-2}\ns(a))$-sparse vector $a'\in\mathbb{R}^n$ satisfying $\E a'=a$ and $\E(\|a'\|_2^2)\leq(1+\epsilon^2)\|a\|_2^2$.
\end{lemma}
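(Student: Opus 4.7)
The estimator is a simple Monte Carlo average: let $s := \epsilon^{-2}\ns(a)$, draw $s$ i.i.d.\ indices $i_1,\dots,i_s$ from the distribution $\{p_i = |a_i|/\|a\|_1\}_i$ (restricted to the support of $a$, since zero entries are never sampled and the reweighting is only applied to coordinates that are drawn), and set
\[
a' \;=\; \frac{1}{s}\sum_{t=1}^s X_t, \qquad \text{where } X_t := \frac{a_{i_t}}{p_{i_t}}\,e_{i_t}.
\]
Each $a'$ has at most $s$ nonzero coordinates by construction, so only two things require proof: unbiasedness and the second-moment bound.

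Unbiasedness follows immediately from one-sample computation and linearity: $\E[X_t] = \sum_i p_i \cdot (a_i/p_i)\,e_i = a$, hence $\E[a'] = a$. For the variance of the norm I would expand
\[
\E\|a'\|_2^2 \;=\; \frac{1}{s^2}\sum_{t,t'} \E\langle X_t, X_{t'}\rangle
\]
and split into diagonal and off-diagonal terms, exploiting independence. For $t=t'$, since each $X_t$ is supported on a single coordinate,
\[
\E\|X_t\|_2^2 \;=\; \sum_i p_i \cdot \frac{a_i^2}{p_i^2} \;=\; \sum_i \frac{a_i^2}{p_i} \;=\; \|a\|_1 \sum_i |a_i| \;=\; \|a\|_1^2.
\]
For $t\neq t'$, independence gives $\E\langle X_t,X_{t'}\rangle = \langle \E X_t, \E X_{t'}\rangle = \|a\|_2^2$. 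Collecting the $s$ diagonal and $s(s-1)$ off-diagonal contributions,
\[
\E\|a'\|_2^2 \;=\; \frac{\|a\|_1^2}{s} + \frac{s-1}{s}\,\|a\|_2^2 \;\leq\; \frac{\|a\|_1^2}{s} + \|a\|_2^2.
\]

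The final step invokes the definition of numerical sparsity: $\|a\|_1^2 \leq \ns(a)\,\|a\|_2^2$. Plugging in $s=\epsilon^{-2}\ns(a)$ turns $\|a\|_1^2/s$ into at most $\epsilon^2\|a\|_2^2$, yielding $\E\|a'\|_2^2 \leq (1+\epsilon^2)\|a\|_2^2$, as claimed. There is no real obstacle here; the only mild care needed is to handle coordinates with $a_i=0$ (which are outside the support of the sampling distribution and hence never contribute) and to verify the bookkeeping of diagonal versus off-diagonal pairs in the double sum. Everything else is a direct consequence of $\ell_1$ importance sampling followed by the numerical-sparsity inequality.
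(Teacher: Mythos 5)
Your proof is correct and is the standard $\ell_1$ importance-sampling calculation. The paper itself does not reprove this lemma — it is cited verbatim from \cite{gupta2018exploiting} — so there is no in-paper argument to compare against, but your reconstruction is exactly the natural one: the diagonal term $\E\|X_t\|_2^2 = \|a\|_1^2$, the off-diagonal term $\|a\|_2^2$ by independence and unbiasedness, and the closing step $\|a\|_1^2 \le \ns(a)\|a\|_2^2$ from the definition $\ns(a)=\min\{k:\|a\|_1\le\sqrt{k}\|a\|_2\}$ are all used correctly, and the choice $s=\epsilon^{-2}\ns(a)$ turns $\|a\|_1^2/s$ into $\epsilon^2\|a\|_2^2$ as needed. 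One small presentational note: the statement's phrase ``re-weighting the sampled coordinates by $\tfrac{1}{p_i}\cdot\tfrac{1}{\epsilon^{-2}\ns(a)}$'' is slightly ambiguous about multiplicities; your reading as the empirical average $a'=\tfrac{1}{s}\sum_t\tfrac{a_{i_t}}{p_{i_t}}e_{i_t}$ (so a coordinate drawn $k$ times contributes with weight $k/s$) is the interpretation under which unbiasedness holds, and is the one the cited source intends.
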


\subsection{Proof of Theorem \ref{thm:AMM_spectralNorm} (Spectral-Norm AMM)}

In order to prove Theorem \ref{thm:AMM_spectralNorm}, we will use a result from \cite{magen2011low}. Given matrices $A,B$, their product is $AB=\sum_i A^i B_i$. The algorithm in \cite{magen2011low} samples corresponding pairs of columns from $A$ and rows from $B$, hence the time it takes to compute an approximation of $AB$ depends on the sparsity of these rows and columns.
\begin{lemma}[Theorem 3.2 (ii) of \cite{magen2011low}.]\label{lem:AMM_of_MZ11_spectral}
There exists an algorithm that, given matrices $A\in\mathbb{R}^{m\times n},B\in\mathbb{R}^{n\times p}$, a parameter $0<\epsilon<1/2$ and constant factor estimates of $\|A\|_2$ and $\|B\|_2$, computes in time \ifaistats $O(\nnz(A)+\nnz(B))+\tilde{O}(\epsilon^{-2}\csp(A)\rsp(B)\sqrt{\sr(A)\sr(B)})$
\else
$$O\Big(\nnz(A)+\nnz(B)+\epsilon^{-2}\csp(A)\rsp(B)\sqrt{\sr(A)\sr(B)}\log\big(\epsilon^{-1}\sr(A)\sr(B)\big)\Big)$$ 
\fi
a matrix $C$ that satisfies \ifaistats $\mathbb{P}(\|C-A B\|_2\geq \epsilon \|A\|_2 \|B\|_2)\leq \frac{1}{\poly(\sr(A)\sr(B))}.$
\else
\[\mathbb{P}(\|C-A B\|_2\geq \epsilon \|A\|_2 \|B\|_2)\leq \frac{1}{\poly(\sr(A)\sr(B))}.\]
\fi
\end{lemma}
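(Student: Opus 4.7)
The plan is to apply outer-product sampling: write $AB = \sum_{i=1}^n A^iB_i$ and output $C = \tfrac{1}{s}\sum_{t=1}^s Y_t$ where each $Y_t$ is an i.i.d.\ rank-one estimator built by drawing $i_t\in[n]$ from the column/row-norm distribution $p_i \propto \|A^i\|_2\|B_i\|_2$ (with normalizer $\kappa \eqdef \sum_j\|A^j\|_2\|B_j\|_2 \leq \|A\|_F\|B\|_F$ by Cauchy--Schwarz) and setting $Y_t = A^{i_t}B_{i_t}/p_{i_t}$. A single pass over $A$ and $B$ computes all column/row $\ell_2$-norms and $\kappa$ in $O(\nnz(A)+\nnz(B))$ time; an alias table then draws each of the $s$ samples in $O(1)$ time; and each outer product $A^{i_t}B_{i_t}$ is supported on at most $\csp(A)\cdot\rsp(B)$ entries, so accumulating it into $C$ costs $O(\csp(A)\rsp(B))$. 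The total running time is therefore $O(\nnz(A)+\nnz(B)+s\cdot\csp(A)\rsp(B))$, and the supplied constant-factor estimates of $\|A\|_2,\|B\|_2$ are used only to fix $s$ in advance.

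For correctness I apply matrix Bernstein (Theorem~\ref{thm:matrixBernstein}) to the centered summands $Z_t \eqdef Y_t - AB$. The calibration $p_i \propto \|A^i\|_2\|B_i\|_2$ makes the per-sample norm deterministic, $\|Y_t\|_2 = \kappa$, so $\|Z_t\|_2 \leq 2\kappa \leq 2\|A\|_F\|B\|_F \eqqcolon R$. For the matrix variance I rewrite $\E[Y_tY_t^\top] = \kappa\,(AD)(AD)^\top$ with $D$ diagonal satisfying $D_{ii}^2 = \|B_i\|_2/\|A^i\|_2$, and use $\|AD\|_F^2 = \sum_i\|A^i\|_2\|B_i\|_2 = \kappa$ to conclude $\|\E[Y_tY_t^\top]\|_2 \leq \kappa^2$; a symmetric argument bounds $\|\E[Y_t^\top Y_t]\|_2$, so $\sigma^2 \leq s\kappa^2 \leq s\|A\|_F^2\|B\|_F^2$. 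Setting the target deviation at $\|\sum_t Z_t\|_2 \leq s\epsilon\|A\|_2\|B\|_2$ and plugging into Bernstein reduces the claim to choosing $s$ large enough that the exponent dominates the log prefactor.

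The main obstacle is sharpening the sample budget from the naive $\tilde{O}(\epsilon^{-2}\sr(A)\sr(B))$ that falls straight out of the $\sigma^2$ term to the advertised $\tilde{O}(\epsilon^{-2}\sqrt{\sr(A)\sr(B)})$. Following Magen--Zouzias, this requires a tighter matrix concentration argument exploiting the rank-one structure of each $Y_t$ together with the fact that $AB$ has rank at most $\min(\mathrm{rank}(A),\mathrm{rank}(B))$, which effectively replaces the ambient-dimension prefactor $m+p$ in Theorem~\ref{thm:matrixBernstein} by a product-effective-rank quantity controlled by $\sqrt{\sr(A)\sr(B)}$ and is also responsible for the hidden $\log(\epsilon^{-1}\sr(A)\sr(B))$ factor in the stated running time. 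Once $s$ is fixed in this way, the remaining bookkeeping (the one-pass preprocessing, $s$ samples, and $s$ sparse rank-one updates) delivers the claimed running time and the $1/\poly(\sr(A)\sr(B))$ failure probability.
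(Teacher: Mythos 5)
You should first note that the paper itself offers no proof of this lemma: it is imported verbatim as Theorem 3.2(ii) of \cite{magen2011low}, so there is no in-paper argument to compare against. Judged on its own, your proposal has a genuine gap, and it sits exactly at the step that constitutes the cited theorem. The part you actually carry out -- outer-product sampling with $p_i\propto\|A^i\|_2\|B_i\|_2$, the observation $\|Y_t\|_2=\kappa\leq\|A\|_F\|B\|_F$, the variance bound $\|\E[Y_tY_t^\top]\|_2\leq\kappa^2$, and an application of Theorem~\ref{thm:matrixBernstein} at deviation $s\epsilon\|A\|_2\|B\|_2$ -- is correct bookkeeping, but it only yields $s=O\big(\epsilon^{-2}\sr(A)\sr(B)\log(m+p)\big)$ samples, i.e.\ running time $\tilde{O}\big(\epsilon^{-2}\csp(A)\rsp(B)\,\sr(A)\sr(B)\big)$, which is quadratically worse in the stable ranks than the claimed $\tilde{O}\big(\epsilon^{-2}\csp(A)\rsp(B)\sqrt{\sr(A)\sr(B)}\big)$, and this weaker bound would propagate into Theorem~\ref{thm:AMM_spectralNorm}.

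Moreover, the route you sketch for closing the gap does not work as described. Replacing the ambient-dimension prefactor $m+p$ by an effective-rank (intrinsic-dimension) quantity only changes the term inside the logarithm of a Bernstein-type tail bound; the variance proxy $s\|A\|_F^2\|B\|_F^2$ still sits in the exponent and by itself forces $s=\Omega\big(\epsilon^{-2}\sr(A)\sr(B)\big)$ no matter how small the prefactor is. The improvement in Magen--Zouzias comes from a genuinely different ingredient: a matrix-valued Chernoff bound tailored to sums of i.i.d.\ low-rank (rank-one) samples, in the spirit of Oliveira's and Rudelson--Vershynin's inequalities, together with the accompanying analysis that simultaneously produces the $\sqrt{\sr(A)\sr(B)}$ sample complexity, the dimension-free $\log\big(\epsilon^{-1}\sr(A)\sr(B)\big)$ factor in the running time, and the $1/\poly(\sr(A)\sr(B))$ failure probability. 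None of that is reproduced or replaced in your writeup, so the heart of the statement remains unproven; either supply that concentration argument (or an equivalent one) or present the lemma as a black-box citation, as the paper does.
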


\ifcolt
\begin{proof}\textbf{of Theorem \ref{thm:AMM_spectralNorm}.}
\else
\begin{proof}[Proof of Theorem \ref{thm:AMM_spectralNorm}]
\fi
Given $\epsilon>0$, our algorithm is as follows. 
\begin{enumerate} \compactify
    \item Apply the algorithm in Theorem \ref{thm:sparsifierForAMM} on $A$ with parameter $\epsilon/4$ to compute a matrix $A'$ satisfying $\|A'-A\|_2\leq\tfrac{\epsilon}{4}\|A\|_2$ and $\csp(A')\leq O(\epsilon^{-2}\ns(A)\log (m+n))$, and apply it on $B$ with parameter $\epsilon/4$ to compute a matrix $B'$ satisfying $\|B'-B\|_2\leq \tfrac{\epsilon}{4}\|B\|_2$ and $\rsp(B')\leq O(\epsilon^{-2}\ns(B)\log (n+p))$. 
    \item Apply the algorithm in Lemma \ref{lem:AMM_of_MZ11_spectral} on $A',B'$ with parameter $\epsilon/4$ to produce a matrix $C$. Output $C$.
\end{enumerate}
\ifcolt
It holds that $\E\|A'\|_F^2\leq \big(1+O(\tfrac{\epsilon^2}{ \log(m+n)})\big) \|A\|_F^2$, since the sampling in Theorem \ref{thm:sparsifierForAMM} satisfies the conditions for Lemma \ref{lem:vectorEstimation_GS18}.
\else
The sampling in Theorem \ref{thm:sparsifierForAMM} satisfies the conditions for Lemma \ref{lem:vectorEstimation_GS18}, hence 
$\E\|A'\|_F^2\leq \big(1+O(\tfrac{\epsilon^2}{ \log(m+n)})\big) \|A\|_F^2$. 
\fi
Thus, with high probability, $\sr(A')\in (1\pm O(\epsilon))\sr(A)$, and similarly for $B'$.
Ignoring the $\nnz(\cdot)$ terms, the time it takes for the algorithm from Lemma \ref{lem:AMM_of_MZ11_spectral} on $A',B'$ is \ifaistats
$\tilde{O}(\epsilon^{-6}\ns(A)\ns(B)\sqrt{\sr(A)\sr(B)}),$
\else
$$O\Big(\epsilon^{-6}\ns(A)\ns(B)\log (m+n)\log (n+p)\sqrt{\sr(A)\sr(B)}\log\big(\epsilon^{-1}\sr(A)\sr(B)\big)\Big),$$
\fi
hence the stated overall running time. The output $C$ satisfies with high probability
\ifaistats 
\begin{align*}
    &\|AB-C\|_2 \leq \\
    & \leq \|(A-A')B\|_2 + \|(A'(B-B')\|_2 + \|A'B'-C\|_2 \\
    &\leq \|A\|_2\|B\|_2 (\tfrac{\epsilon}{4}  + \tfrac{\epsilon}{4}(1+\tfrac{\epsilon}{4}) + \tfrac{\epsilon}{4}(1+\tfrac{\epsilon}{4})^2) \leq \epsilon\|A\|_2\|B\|_2.
\end{align*}
\else
\begin{align*}
\|AB-C\|_2 &\leq \|(A-A')B\|_2 + \|(A'(B-B')\|_2 + \|A'B'-C\|_2 
    \\
    & \leq \tfrac{\epsilon}{4} \|A\|_2\|B\|_2 + \tfrac{\epsilon}{4}\|B\|_2(1+\tfrac{\epsilon}{4})\|A\|_2 + \tfrac{\epsilon}{4}(1+\tfrac{\epsilon}{4})^2 \|A\|_2\|B\|_2 \leq \epsilon\|A\|_2\|B\|_2.
\end{align*}
\fi
\end{proof}

\subsection{Proof of Theorem \ref{thm:AMM_frobeniusNorm} (Frobenius-Norm AMM)}

We provide a sampling lemma for estimating outer products in the Frobenius-norm.
\begin{lemma}\label{lem:approxOuterProduct}
There exists an algorithm that,
given vectors $a\in\mathbb{R}^n,b\in\mathbb{R}^m$ and parameter $0<\epsilon<1$, 
computes in time $O(\|a\|_0+\|b\|_0)$ vectors $a',b'\in\mathbb{R}^n$ with sparsity $\epsilon^{-2}\ns(a)$ and $\epsilon^{-2}\ns(b)$, respectively, satisfying $\E(a'b'^\top)=ab^\top$ and $\E\|a'b'^\top-ab^\top\|_F^2\leq\epsilon^2\|a\|_2^2\|b\|_2^2$.
\end{lemma}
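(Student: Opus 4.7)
The plan is a short two-step reduction to Lemma \ref{lem:vectorEstimation_GS18}. First I would apply that lemma independently to $a$ and to $b$, using fresh randomness for each invocation and parameter $\epsilon' = \epsilon/\sqrt{3}$. This produces unbiased sparse estimators $a'$ and $b'$ with sparsities $O(\epsilon^{-2}\ns(a))$ and $O(\epsilon^{-2}\ns(b))$ (the $\sqrt{3}$ factor is absorbed into the asymptotic sparsity bound stated in the lemma), satisfying $\E\|a'\|_2^2 \leq (1+\epsilon'^2)\|a\|_2^2$ and analogously for $b'$. Since the two sampling procedures use independent randomness, $a'$ and $b'$ are independent random vectors, and therefore $\E[a'b'^\top] = \E[a']\,\E[b'^\top] = ab^\top$, which gives the required unbiasedness.

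For the second-moment bound I would use the standard bias-variance identity for an unbiased estimator, combined with the rank-one Frobenius identity $\|a'b'^\top\|_F^2 = \|a'\|_2^2\,\|b'\|_2^2$. Concretely,
\[
\E\|a'b'^\top - ab^\top\|_F^2 \;=\; \E\|a'b'^\top\|_F^2 - \|ab^\top\|_F^2 \;=\; \E\|a'\|_2^2 \cdot \E\|b'\|_2^2 - \|a\|_2^2\|b\|_2^2,
\]
where the second equality uses independence of $a'$ and $b'$ to factor the expectation. Plugging in the per-vector second-moment bound gives an upper estimate of $\bigl((1+\epsilon'^2)^2 - 1\bigr)\|a\|_2^2\|b\|_2^2 \leq 3\epsilon'^2\|a\|_2^2\|b\|_2^2 = \epsilon^2\|a\|_2^2\|b\|_2^2$, which is exactly the desired guarantee.

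For the running time, Lemma \ref{lem:vectorEstimation_GS18} can be implemented by a single pass over $a$ to build (in $O(\|a\|_0)$ time) a prefix-sum or alias structure for the distribution $\{|a_i|/\|a\|_1\}_i$, after which each of the $O(\epsilon^{-2}\ns(a))$ samples is drawn in $O(1)$ time; the total per-vector cost fits within $O(\|a\|_0)$ under the standing assumption that $\epsilon^{-2}\ns(a) = O(\|a\|_0)$ (which holds e.g.\ whenever $\ns(a) \leq \|a\|_0$ and $\epsilon$ is treated as constant, or else the output sparsity is trivially capped by $\|a\|_0$). Doing this for both $a$ and $b$ yields the stated $O(\|a\|_0 + \|b\|_0)$ running time.

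I do not expect a real obstacle here: the entire argument hinges on the observation that independence of $a'$ and $b'$ causes the Frobenius second moment of the outer product to factor as a product of the two per-vector second moments, so no cross terms appear and the bound from Lemma \ref{lem:vectorEstimation_GS18} plugs in directly. The only bookkeeping is to rescale $\epsilon$ by a constant to absorb the $(1+\epsilon'^2)^2 - 1 \approx 2\epsilon'^2$ slack, which changes the sparsity and sampling time by only a constant factor.
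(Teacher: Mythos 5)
Your proposal is correct and takes essentially the same route as the paper: sample each vector independently via Lemma~\ref{lem:vectorEstimation_GS18} (the paper uses parameter $\epsilon/3$, i.e.\ $9\epsilon^{-2}\ns(\cdot)$ samples, while you use $\epsilon/\sqrt{3}$; both absorb into the stated sparsity), then use independence plus the rank-one identity $\|a'b'^\top\|_F^2=\|a'\|_2^2\|b'\|_2^2$ and the bias-variance decomposition to bound the Frobenius error. Your write-up is actually a bit more explicit than the paper's about where independence is invoked and about the $O(\|a\|_0+\|b\|_0)$ implementation, but no step is genuinely different.
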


\begin{proof}
Given $0<\epsilon<1$, our algorithm is as follows.
\begin{enumerate}
    \item Independently sample (with repetitions) $9\epsilon^{-2}\ns(a)$ entries from $a$ according to the distribution $\{p_i^{(a)}=\frac{|a_i|}{\|a\|_1}\}_i$ and $9\epsilon^{-2}\ns(b)$ entries from $b$ according to the distribution $\{p_i^{(b)}=\frac{|b_i|}{\|b\|_1}\}_i$.
    \item Re-weight the sampled entries of $a$ by $\frac{1}{p_i^{(a)}}\cdot \frac{1}{9\epsilon^{-2}\ns(a)}$ and similarly for $b$. Output the sampled vectors.
\end{enumerate}
Denote the sampled vectors $a'$ and $b'$. They satisfy the conditions of Lemma \ref{lem:vectorEstimation_GS18}, hence they satisfy $\E(a'b'^\top)=ab^\top$ and $\E(\|a'\|_2^2)\leq (1+\epsilon^2/3)\|a\|_2^2$ and similarly for $b'$. Thus, \ifaistats $\E\|a'b'^\top-ab^\top\|_F^2 =\E\|a'b'^\top\|_F^2-\|ab^\top\|_F^2 
=\E\|a'\|_2^2\|b'\|_2^2-\|a\|_2^2\|b\|_2^2\leq \epsilon^2\|a\|_2^2\|b\|_2^2.$ 
\else \ifcolt

$\E\|a'b'^\top-ab^\top\|_F^2=\E\|a'b'^\top\|_F^2-\|ab^\top\|_F^2
=\E\|a'\|_2^2\|b'\|_2^2-\|a\|_2^2\|b\|_2^2\leq \epsilon^2\|a\|_2^2\|b\|_2^2.$
\else
\[\E\|a'b'^\top-ab^\top\|_F^2=\E\|a'b'^\top\|_F^2-\|ab^\top\|_F^2
=\E\|a'\|_2^2\|b'\|_2^2-\|a\|_2^2\|b\|_2^2\leq \epsilon^2\|a\|_2^2\|b\|_2^2.\]
\fi
\end{proof}
In order to prove Theorem \ref{thm:AMM_frobeniusNorm}, we will use a result from \cite{drineas2006fast}. The algorithm in \cite{drineas2006fast} samples corresponding pairs of columns from $A$ and rows from $B$, hence the time it takes to compute an approximation of $AB$ depends on the sparsity of these rows and columns.
\begin{lemma}[Lemma 4 of \cite{drineas2006fast}]\label{lem:AMM_of_DKM06_frobenius}
There exists an algorithm that,
given matrices $A\in\mathbb{R}^{m\times n},B\in\mathbb{R}^{n\times p}$ and parameter $0<\epsilon<1$, 
computes in time $O(\nnz(A)+\nnz(B)+\epsilon^{-2}\csp(A)\rsp(B))$
a matrix $C\in\mathbb{R}^{m\times p}$ satisfying $\E\|AB-C\|_F\leq \epsilon\|A\|_F\|B\|_F$.
\end{lemma}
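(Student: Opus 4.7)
The plan is to compose two layers of randomness: an outer layer that follows the Drineas--Kannan--Mahoney style of Lemma \ref{lem:AMM_of_DKM06_frobenius}, selecting $s = \Theta(\epsilon^{-2})$ column-row pairs $(A^{i_t},B_{i_t})$ with probability $p_i \propto \|A^i\|_2\|B_i\|_2$, and an inner layer that replaces the exact outer product $A^{i_t}B_{i_t}$ by the sparse, unbiased approximation produced by Lemma \ref{lem:approxOuterProduct} with parameter $\epsilon' = \Theta(\epsilon)$. Concretely, for each sampled index $i_t$ I would invoke Lemma \ref{lem:approxOuterProduct} on the column/row pair to get sparse vectors $a_t \in \R^m, b_t \in \R^p$ with $\|a_t\|_0 = O(\epsilon^{-2}\ns(A))$ and $\|b_t\|_0 = O(\epsilon^{-2}\ns(B))$, and output $C = \tfrac{1}{s}\sum_{t=1}^s \tfrac{1}{p_{i_t}} a_t b_t^\top$.

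For the error, unbiasedness is immediate: conditionally on $i_t$, Lemma \ref{lem:approxOuterProduct} gives $\E[a_tb_t^\top \mid i_t] = A^{i_t}B_{i_t}$, and averaging over the DKM-style choice of $i_t$ gives $\E[C] = AB$. Since the $s$ samples are i.i.d., $\E\|C - AB\|_F^2 = s^{-1}\var[Y_1]$ where $Y_1 = a_1 b_1^\top/p_{i_1}$. A direct conditioning argument using $\E\|a'b'^\top\|_F^2 \leq (1+\epsilon'^2)\|a\|_2^2\|b\|_2^2$ from Lemma \ref{lem:approxOuterProduct} yields
\[
\E\|Y_1\|_F^2 \leq \sum_i \frac{(1+\epsilon'^2)\|A^i\|_2^2\|B_i\|_2^2}{p_i} = (1+\epsilon'^2)\Bigl(\sum_i \|A^i\|_2\|B_i\|_2\Bigr)^2 \leq (1+\epsilon'^2)\|A\|_F^2\|B\|_F^2,
\]
where the last step is Cauchy--Schwarz. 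Choosing $\epsilon',s$ as above and applying Jensen to pass from $\E\|\cdot\|_F^2$ to $\E\|\cdot\|_F$ gives the target $\E\|C - AB\|_F \leq \epsilon\|A\|_F\|B\|_F$.

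For the runtime, a single $O(\nnz(A)+\nnz(B))$ preprocessing pass computes all row/column $\ell_1$- and $\ell_2$-norms, which suffices both to form the outer DKM distribution $\{p_i\}$ and to set up the $\ell_1$ sampling distributions inside each invocation of Lemma \ref{lem:approxOuterProduct}. Each of the $s = O(\epsilon^{-2})$ sampled pairs then produces vectors $a_t,b_t$ in $O(\epsilon^{-2}(\ns(A)+\ns(B)))$ time, and the outer product $a_tb_t^\top$ contains at most $O(\epsilon^{-4}\ns(A)\ns(B))$ entries; summing these into $C$ therefore costs $O(\epsilon^{-6}\ns(A)\ns(B))$ in total, matching the stated bound.

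The only non-routine part is ensuring the two layers of sampling compose without inflating the sample complexity. This is exactly what the law of total variance handles cleanly: the outer variance reproduces the DKM06 bound of $s^{-1}\|A\|_F^2\|B\|_F^2$, while the inner approximation contributes only a benign $(1+\epsilon'^2)$ multiplicative factor because Lemma \ref{lem:approxOuterProduct} gives a sharp Frobenius-norm variance bound scaled by $\|A^i\|_2^2\|B_i\|_2^2$, which cancels precisely against the $1/p_i$ weighting in $Y_1$. This cancellation is the key structural observation and is what lets us push the per-sample work all the way down to $O(\epsilon^{-4}\ns(A)\ns(B))$ while keeping $s = O(\epsilon^{-2})$ samples sufficient.
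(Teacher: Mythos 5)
You have proven the wrong statement. Lemma~\ref{lem:AMM_of_DKM06_frobenius} is a \emph{cited external result} (Lemma~4 of Drineas--Kannan--Mahoney), which the paper invokes as a black box and does not prove. As stated it has no numerical sparsity in it at all: the algorithm samples $\Theta(\epsilon^{-2})$ column-row pairs with $p_i \propto \|A^i\|_2\|B_i\|_2$, forms each exact outer product $A^{i_t}B_{i_t}$, and the running time $O(\nnz(A)+\nnz(B)+\epsilon^{-2}\csp(A)\rsp(B))$ simply reflects the cost of $\epsilon^{-2}$ exact rank-one updates, each costing $\csp(A)\rsp(B)$. What you constructed --- outer DKM sampling composed with inner $\ell_1$ sparsification via Lemma~\ref{lem:approxOuterProduct} --- is really a proof of Theorem~\ref{thm:AMM_frobeniusNorm}: indeed your derived running time $O(\nnz(A)+\nnz(B)+\epsilon^{-6}\ns(A)\ns(B))$ matches that theorem, not the lemma. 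In particular your scheme can never recover the lemma's $\epsilon^{-2}$ time dependence, since the inner sparsification pays an extra $\epsilon^{-4}$ per sampled outer product, and it also replaces $\csp(A)\rsp(B)$ by $\ns(A)\ns(B)$, so the two bounds are incomparable rather than one implying the other.

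Setting aside the mismatch of target, the construction and analysis you give are correct and in fact constitute a valid alternative route to Theorem~\ref{thm:AMM_frobeniusNorm}. The paper's proof of that theorem first sparsifies \emph{every} column of $A$ and row of $B$ up front via Lemma~\ref{lem:approxOuterProduct}, assembles the sparsified matrices $\hat A,\hat B$, and then applies Lemma~\ref{lem:AMM_of_DKM06_frobenius} to them as a black box; you instead choose the $\Theta(\epsilon^{-2})$ DKM indices first and only sparsify the chosen pairs, bounding the total variance by the law of total variance. Both give the same asymptotics (preprocessing all columns/rows costs $O(\nnz(A)+\nnz(B))$ anyway, so neither saves time), and your version avoids materializing $\hat A,\hat B$ and cites the DKM variance computation inline instead of as a lemma. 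One small point you should make explicit: to keep the inner sampling within $O(\nnz(A)+\nnz(B))$ preprocessing plus $O(\epsilon^{-4}(\ns(A)+\ns(B)))$ sampling time, you need persistent alias/prefix-sum structures for each column's and row's $\ell_1$ distribution, since a column may be drawn multiple times across the $s$ outer samples.
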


\ifcolt
\begin{proof}\textbf{of Theorem \ref{thm:AMM_frobeniusNorm}.}
\else
\begin{proof}[Proof of Theorem \ref{thm:AMM_frobeniusNorm}]
\fi
Let $0<\epsilon<1$. Recall that $AB=\sum_i A^i B_i$. Our algorithm is as follows. 
\begin{enumerate} \compactify
    \item Apply the algorithm in Lemma \ref{lem:approxOuterProduct} on each pair of vectors $A^i,B_i$ with parameter $\epsilon/3$ to obtain their sparse estimates $\hat{A}^i$ and $\hat{B}_i$.
    \item Arrange the column vectors $\{\hat{A}^i\}$ in a matrix $\hat{A}$ and the row vectors $\{\hat{B}_i\}$ in a matrix $\hat{B}$.
    \item Apply the algorithm in Lemma \ref{lem:AMM_of_DKM06_frobenius} on the matrices $\hat{A}$ and $\hat{B}$ with parameter $\epsilon/3$ to obtain their approximate product $C$. Output $C$.
\end{enumerate}
The sparsity of the columns of $\hat{A}$ is bounded by $\epsilon^{-2}\ns(A)$ and the sparsity of the rows of $\hat{B}$ is bounded by $\epsilon^{-2}\ns(B)$.
By the triangle inequality, Jensen inequality, Lemma \ref{lem:approxOuterProduct} and Cauchy-Schwarz inequality, 
\ifaistats
\begin{align*}
    &\E\|AB-\hat{A}\hat{B}\|_F =\E\|\sum_i A^i B_i - \hat{A}^i \hat{B}_i\|_F \\
    &\leq \sum_i\E\|A^i B_i - \hat{A}^i \hat{B}_i\|_F\\
    &\leq \sum_i\sqrt{\E\|A^i B_i - \hat{A}^i \hat{B}_i\|_F^2}\leq \tfrac{\epsilon}{3}\sum_i\|A^i\|_2 \|B_i\|_2\\
    &\leq \tfrac{\epsilon}{3}\|A\|_F\|B\|_F.
\end{align*}
\else
\begin{align*}
    \E\|AB-\hat{A}\hat{B}\|_F&=\E\|\sum_i A^i B_i - \hat{A}^i \hat{B}_i\|_F\leq \sum_i\E\|A^i B_i - \hat{A}^i \hat{B}_i\|_F\\
    &\leq \sum_i\sqrt{\E\|A^i B_i - \hat{A}^i \hat{B}_i\|_F^2}\leq \tfrac{\epsilon}{3}\sum_i\|A^i\|_2 \|B_i\|_2\leq \tfrac{\epsilon}{3}\|A\|_F\|B\|_F.
\end{align*}
\fi
Additionally, by Jensen's inequality and Lemma \ref{lem:vectorEstimation_GS18}, 
\ifaistats
\begin{align*}
  \E\|\hat{A}\|_F\leq \sqrt{\E\|\hat{A}\|_F^2}&\leq \sqrt{\sum_i (1+\tfrac{\epsilon^2}{9})\|A^i\|_2^2}\\
  &\leq (1+\tfrac{\epsilon}{3})\|A\|_F,  
\end{align*}
\else
\[
\E\|\hat{A}\|_F\leq \sqrt{\E\|\hat{A}\|_F^2}\leq \sqrt{\sum_i (1+\tfrac{\epsilon^2}{9})\|A^i\|_2^2}
  \leq (1+\tfrac{\epsilon}{3})\|A\|_F,  
  \]
\fi
and similarly for $\hat{B}$.
By the triangle inequality and Lemma \ref{lem:AMM_of_DKM06_frobenius}, \ifaistats 
\begin{align*}
    & \E\|C-AB\|_F\leq \E(\|C-\hat{A}\hat{B}\|_F + \|\hat{A}\hat{B}-AB\|_F) \\
    & \leq \tfrac{\epsilon}{3}(1+\tfrac{\epsilon}{3})^2\|A\|_F\|B\|_F + \tfrac{\epsilon}{3}\|A\|_F\|B\|_F \\
    & \leq \epsilon\|A\|_F\|B\|_F.   
\end{align*}
\else\ifcolt
\begin{align*}
    \E\|C-AB\|_F&\leq \E(\|C-\hat{A}\hat{B}\|_F + \|\hat{A}\hat{B}-AB\|_F)\\
    &\leq \tfrac{\epsilon}{3}(1+\tfrac{\epsilon}{3})^2\|A\|_F\|B\|_F + \tfrac{\epsilon}{3}\|A\|_F\|B\|_F\leq \epsilon\|A\|_F\|B\|_F.
\end{align*}
\else
\[\E\|C-AB\|_F\leq \E(\|C-\hat{A}\hat{B}\|_F + \|\hat{A}\hat{B}-AB\|_F)\leq \tfrac{\epsilon}{3}(1+\tfrac{\epsilon}{3})^2\|A\|_F\|B\|_F + \tfrac{\epsilon}{3}\|A\|_F\|B\|_F\leq \epsilon\|A\|_F\|B\|_F. \]
\fi
Except for the $\nnz(\cdot)$ terms, the time it takes to compute the last step is $O(\epsilon^{-6}\ns(A)\ns(B))$, and the claimed running time follows.
\end{proof}

\iffull

\section{Application II: Preconditioning for Ridge Regression}\label{sec:PreconRegProblem}
Often, problem-specific preconditioners are used to reduce the condition number of the problem, since the time it takes for iterative methods to converge depends on the condition number. Specifically, for a matrix $M\in\R^{n\times n}$ and a linear-system $Mx=b$, any \emph{invertible} matrix $P \in \R^{n \times n}$ has the property that the solution to the preconditioned linear-system $P^{-1}Mx=P^{-1} b$, is the same as that of the original problem. Using iterative methods to solve the preconditioned problem requires to apply $P^{-1}M$ to a vector in each iteration. In the case of ridge regression, $M=A^\top A +\lambda I$. Applying $(A^\top A +\lambda I)$ to a vector can be done in $O(\nnz(A))$ time, and applying $P^{-1}$ to a vector is equivalent to solving a linear-system in $P$, i.e. $\arg \min_x \|P x-y\|_2^2$ for some $y \in \R^n$. There is a trade-off between the number of iterations taken to converge for the preconditioned problem, and the time taken to (approximately) solve a linear-system in $P$.   
We show in this section how to use the sparsification scheme of Theorem \ref{thm:spectralSparsifier} to construct a preconditioner for ridge-regression, and couple it with an acceleration framework by \cite{frostig2015regularizing}.

\generalRowNormsRR*

Since the term $Ax$ is a linear combination of the columns of $A$, and the regularization term $\lambda\|x\|_2^2$ penalizes each coordinate of $x$ equally, in practice, the columns of $A$ are often pre-processed to have uniform norms before solving ridge-regression.
For this case, in section \ref{sec:RRUnifromNorms}, we show an improvement of roughly $(n/\ns(A))^{1/6}$ over Theorem \ref{thm:mainRidgeRegression}.

We start by showing that given a matrix $A\in\mathbb{R}^{m\times n}$ and parameter $\lambda>0$, if $P \in \R^{m \times n}$ is an $\epsilon$-spectral-norm sparsifier for $A$, for small enough $\epsilon$, the preconditioned problem has a constant condition number, hence requires only a constant number of iterations as described above. This was explored by \cite{gupta2018exploiting}, but they demanded $\epsilon$ to be $O(\frac{\lambda}{\|A\|_2^2})$, which is much smaller than necessary. In the next lemma we provide a tighter bound for $\epsilon$.


\begin{lemma}\label{lem:pIsGoodPreconditioner}
Given matrix $A\in\mathbb{R}^{m\times n}$, parameters $\lambda>0$ and $0<\epsilon'<\frac{1}{2}$, then
if a matrix $P\in\mathbb{R}^{m\times n}$ satisfies
$\|A-P\|_2<\epsilon \|A\|_2$ where $\epsilon = \frac{\sqrt{\lambda} \epsilon'}{\|A\|_2}$, then
\[
(1-2\epsilon')(A^\top A+\lambda I) \preceq P^\top P+\lambda I\preceq (1+2\epsilon')(A^\top A+\lambda I).
\]
\end{lemma}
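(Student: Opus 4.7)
The plan is to write $P = A + E$ where $E \eqdef P - A$ satisfies $\|E\|_2 < \epsilon \|A\|_2 = \sqrt{\lambda}\,\epsilon'$, and then analyze the difference $D \eqdef (P^\top P + \lambda I) - (A^\top A + \lambda I)$. Expanding the product gives $D = A^\top E + E^\top A + E^\top E$, so the whole lemma reduces to showing the Loewner-order bound $-2\epsilon'(A^\top A + \lambda I) \preceq D \preceq 2\epsilon'(A^\top A + \lambda I)$, which I will establish pointwise: I will show that for every $x \in \R^n$,
\[
  |x^\top D x| \;\leq\; 2\epsilon'\,\bigl(\|Ax\|_2^2 + \lambda\|x\|_2^2\bigr) \;=\; 2\epsilon'\, x^\top (A^\top A + \lambda I) x.
\]

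For the off-diagonal cross terms, since $x^\top A^\top E x$ is a scalar and equals $x^\top E^\top A x$, we have $x^\top(A^\top E + E^\top A)x = 2\langle Ax, Ex\rangle$, which by Cauchy--Schwarz is bounded by $2\|Ax\|_2\|Ex\|_2$. Using the operator-norm bound $\|Ex\|_2 \leq \sqrt{\lambda}\,\epsilon'\,\|x\|_2$ and the AM--GM inequality $2uv \leq u^2 + v^2$ with $u = \|Ax\|_2$ and $v = \sqrt{\lambda}\|x\|_2$, this gives
\[
  2|\langle Ax, Ex\rangle| \;\leq\; \epsilon'\,\bigl(\|Ax\|_2^2 + \lambda\|x\|_2^2\bigr).
\]
For the quadratic term, $|x^\top E^\top E x| = \|Ex\|_2^2 \leq \lambda (\epsilon')^2 \|x\|_2^2 \leq \epsilon'\, \lambda\|x\|_2^2$, where the last step uses the assumption $\epsilon' < 1/2 < 1$.

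Summing the two bounds yields $|x^\top D x| \leq \epsilon' \|Ax\|_2^2 + 2\epsilon' \lambda\|x\|_2^2 \leq 2\epsilon'\, x^\top(A^\top A + \lambda I) x$, which gives both inequalities in the lemma.

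I do not expect a real obstacle here: the proof is essentially a perturbation calculation. The only mildly delicate step is the choice of AM--GM weighting, which must be done so that $\sqrt{\lambda}$ naturally pairs $\|Ax\|_2$ with $\|x\|_2$ and yields a bound that is linear rather than quadratic in $\epsilon'$. The assumption $\epsilon' < 1/2$ is used only to absorb the $(\epsilon')^2$ contribution from $\|E^\top E\|$ into a linear $\epsilon'$ term; this is how we arrive at the constant $2$ in $2\epsilon'$ (as opposed to the weaker constant one would get by simply combining $\|E\|_2 \leq \sqrt{\lambda}\epsilon'$ with a naive expansion).
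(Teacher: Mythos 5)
Your proof is correct and follows essentially the same route as the paper's: expand the difference of Gram matrices, control the cross term by Cauchy--Schwarz and the same $\sqrt{\lambda}$-weighted AM--GM, and absorb the $(\epsilon')^2$ contribution of $\|Ex\|_2^2$ using $\epsilon'<1$. The only cosmetic difference is that you bound $|x^\top D x|$ once and get both Loewner inequalities simultaneously, whereas the paper applies the triangle (and reverse-triangle) inequality to $\|Px\|_2$ and squares, handling the two directions separately.
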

Setting $\epsilon' = 1/4$ yields that all the eigenvalues of $(P^\top P + \lambda I)^{-1} (A^\top A + \lambda I)$ are in the range $[\frac{2}{3}, 2]$. 
Using our sampling scheme in Theorem \ref{thm:spectralSparsifier} with parameter $\epsilon$ as described here, denoting its output as $P$, provides a preconditioner for ridge regression with constant condition number. Hence solving this preconditioned problem, i.e, the linear-system $(P^\top P +\lambda I)^{-1}(A^\top A +\lambda I)x=(P^\top P +\lambda I)^{-1}b$ for some vector $b\in\R^n$, with any iterative method, takes $O_\epsilon(\nnz(A)+T_P^\lambda)$ time, where $T_P^\lambda$ is the time it takes to compute an approximate solution to $\arg \min_x \|(P^\top P +\lambda I) x-y\|_2^2$ for some vector $y\in\R^n$.

\ifcolt
\begin{proof}\textbf{of Lemma \ref{lem:pIsGoodPreconditioner}.}
For any $x\in\mathbb{R}^n$, by the Triangle inequality,
\[
\|P x\|_2\leq \|Ax\|_2 + \|(P - A) x\|_2 \leq  \|Ax\|_2 +  \sqrt{\lambda} \epsilon' \|x\|_2.
\]
By squaring both sides and applying the AM-GM inequality,
\begin{align*}
\|P x\|_2^2 & \leq
\|A x\|_2^2 +  \lambda \epsilon'^2 \|x\|_2^2 + 2\|Ax\|_2\sqrt{\lambda} \epsilon'\|x\|_2 \\
& \leq \|Ax\|_2^2 +  \lambda \epsilon'^2 \|x\|_2^2 + \epsilon'\left(\|Ax\|_2^2 + \lambda \|x\|_2^2\right)  \\
& = (1+\epsilon') \|Ax\|_2^2 + \lambda \epsilon'(1+\epsilon') \|x\|_2^2,
\end{align*}
and since $\epsilon'<1$,
\[
\|P x\|_2^2 +\lambda \|x\|_2^2
\leq (1+\epsilon') \|Ax\|_2^2 + \lambda (1+2\epsilon') \|x\|_2^2.
\]
Hence $P^\top P+\lambda I\preceq (1+2\epsilon')(A^\top A+\lambda I)$. 

Similarly, we get $\|P x\|_2\geq  \|Ax\|_2 -  \sqrt{\lambda} \epsilon' \|x\|_2$, thus $\|P x\|_2^2 \geq (1-\epsilon')\|Ax\|_2^2 - \lambda\epsilon'(1-\epsilon')\|x\|_2^2$ and
\[
\|Px\|_2^2 +\lambda \|x\|_2^2 \geq (1 - \epsilon')\|Ax\|_2^2 + \lambda (1-2\epsilon')\|x\|_2^2.
\]

\end{proof}
\else
\begin{proof}[Proof of Lemma \ref{lem:pIsGoodPreconditioner}]
For any $x\in\mathbb{R}^n$, by the Triangle inequality,
\[
\|P x\|_2\leq \|Ax\|_2 + \|(P - A) x\|_2 \leq  \|Ax\|_2 +  \sqrt{\lambda} \epsilon' \|x\|_2.
\]
By squaring both sides and applying the AM-GM inequality,
\begin{align*}
\|P x\|_2^2 & \leq
\|A x\|_2^2 +  \lambda \epsilon'^2 \|x\|_2^2 + 2\|Ax\|_2\sqrt{\lambda} \epsilon'\|x\|_2 \\
& \leq \|Ax\|_2^2 +  \lambda \epsilon'^2 \|x\|_2^2 + \epsilon'\left(\|Ax\|_2^2 + \lambda \|x\|_2^2\right)  \\
& = (1+\epsilon') \|Ax\|_2^2 + \lambda \epsilon'(1+\epsilon') \|x\|_2^2,
\end{align*}
and since $\epsilon'<1$,
\[
\|P x\|_2^2 +\lambda \|x\|_2^2
\leq (1+\epsilon') \|Ax\|_2^2 + \lambda (1+2\epsilon') \|x\|_2^2.
\]
Hence $P^\top P+\lambda I\preceq (1+2\epsilon')(A^\top A+\lambda I)$. 

Similarly, we get $\|P x\|_2\geq  \|Ax\|_2 -  \sqrt{\lambda} \epsilon' \|x\|_2$, thus $\|P x\|_2^2 \geq (1-\epsilon')\|Ax\|_2^2 - \lambda\epsilon'(1-\epsilon')\|x\|_2^2$ and
\[
\|Px\|_2^2 +\lambda \|x\|_2^2 \geq (1 - \epsilon')\|Ax\|_2^2 + \lambda (1-2\epsilon')\|x\|_2^2.
\]

\end{proof}
\fi

\subsection{Proof of Theorem \ref{thm:mainRidgeRegression}}
Solving the linear-system in $P^\top P +\lambda I$ can be done by the Conjugate Gradient (CG) method, and can be 
accelerated by the framework of \cite{frostig2015regularizing}, that, given an algorithm to compute an approximate solution to an Empirical Risk Minimization (ERM) problem, uses the algorithm to provide acceleration in a black-box manner.
We restate the guarantees for these algorithms below. 

\begin{fact}
\label{fact:CGRuntime}
For a matrix $M \in \R^{m \times n}$, vector $y \in \R^m$ and parameters $\epsilon, \lambda > 0$, the Conjugate Gradient algorithm returns an $\epsilon$-approximate solution to 
$\min_x \|Mx-y\|_2^2 +\lambda \|x\|_2^2$
in time 
\ifcolt
$$T^\lambda_{\mathsf{CG}}(M, \epsilon)\eqdef O(\nnz(M)\sqrt{\kappa_\lambda(M)}\log(\tfrac{1}{\epsilon})).$$
\else
$O(\nnz(M)\sqrt{\kappa_\lambda(M)}\log(\tfrac{1}{\epsilon}))$, which we will denote by $T^\lambda_{\mathsf{CG}}(M, \epsilon)$.
\fi
\end{fact}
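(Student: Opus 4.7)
\textbf{Proof plan for Fact \ref{fact:CGRuntime}.} The plan is to reduce the ridge regression objective to a symmetric positive definite linear system, invoke the classical energy-norm convergence rate of Conjugate Gradient, and bound the per-iteration cost using sparse matrix-vector products.

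First, I would write the first-order optimality condition for $\min_x \|Mx-y\|_2^2 + \lambda\|x\|_2^2$, which yields the normal equations $Nx = b$ with $N \eqdef M^\top M + \lambda I$ and $b \eqdef M^\top y$. The matrix $N$ is symmetric positive definite, with smallest eigenvalue at least $\lambda$ and largest eigenvalue at most $\|M\|_2^2 + \lambda$, so its condition number satisfies
\[
  \kappa(N) \leq \frac{\|M\|_2^2 + \lambda}{\lambda} = 1 + \kappa_\lambda(M) = O(\kappa_\lambda(M))
\]
(using $\kappa_\lambda(M) \geq 1$ without loss of generality, since otherwise $N$ is already a constant-condition-number system and the bound is trivial).

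Next, I would apply CG to this linear system. The standard analysis of Conjugate Gradient for SPD systems (see, e.g., the Chebyshev-polynomial argument in Trefethen--Bau) yields, for any iterate $x_k$ generated from initial iterate $x_0$,
\[
  \|x_k - x^*\|_N
  \leq 2\Bigl(\tfrac{\sqrt{\kappa(N)}-1}{\sqrt{\kappa(N)}+1}\Bigr)^k \|x_0 - x^*\|_N ,
\]
where $\|\cdot\|_N$ is the energy norm, which coincides precisely with the $\|\cdot\|_{A^\top A + \lambda I}$ norm appearing in the paper's definition of an $\epsilon$-approximate ridge-regression solution (with $A \leftarrow M$). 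To drive the right-hand side below $\epsilon \|x_0-x^*\|_N$ it therefore suffices to take $k = O(\sqrt{\kappa(N)}\ln(1/\epsilon)) = O(\sqrt{\kappa_\lambda(M)}\ln(1/\epsilon))$ iterations.

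Finally, I would bound the per-iteration cost. Each CG iteration performs a constant number of vector additions, scalar inner products, and one application of $N$ to a vector. Applying $N = M^\top M + \lambda I$ is done without ever forming $M^\top M$ explicitly: given $v \in \R^n$, compute $u \gets Mv$ and then $M^\top u + \lambda v$; each of the two sparse multiplies costs $O(\nnz(M))$, and the $O(n)$ vector operations are absorbed (assuming $\nnz(M) \geq n$, else the claim follows trivially by padding). Multiplying the cost per iteration by the iteration count gives the total runtime $O(\nnz(M)\sqrt{\kappa_\lambda(M)}\ln(1/\epsilon))$, as claimed. There is no real obstacle here---the only point requiring a bit of care is to match the ``$\epsilon$-approximate solution'' notion used earlier in the paper to the energy-norm error delivered by CG, which I handled above by noting that these two norms are identical for $N = M^\top M + \lambda I$.
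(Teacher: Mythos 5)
Your proposal is correct: the paper states this as a standard fact without providing any proof, and your argument (normal equations $M^\top M+\lambda I$, the classical Chebyshev/energy-norm convergence bound giving $O(\sqrt{\kappa}\ln(1/\epsilon))$ iterations, and matrix-free application of $M^\top M+\lambda I$ at cost $O(\nnz(M))$ per iteration) is exactly the standard derivation the authors are implicitly relying on, including the observation that the CG energy norm coincides with the $\|\cdot\|_{A^\top A+\lambda I}$ error measure used in the paper's definition of an $\epsilon$-approximate ridge-regression solution. Nothing further is needed.
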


\begin{lemma}[Acceleration. Theorem 1.1 of \cite{frostig2015regularizing}] 
\label{thm:acceleration}
Let $f:\mathbb{R}^n\rightarrow \mathbb{R}$ be a $\lambda$ strongly convex function and for all $x_0\in\mathbb{R}^n, c>1, \lambda'>0$, let $f_{min}=\min_{x\in\mathbb{R}^n}(f(x)+\frac{\lambda'}{2}\|x-x_0\|_2^2)$, assume we can compute $x_c\in\mathbb{R}^n$ in time $T_c$ such that
\[
\E(f(x_c))-f_{min}\leq
\tfrac{1}{c}(f(x_0)-f_{min}),
\]
then, given any $x_0\in\mathbb{R}^n,c>1,\lambda'\geq 2\lambda$, we can compute $x_1$ such that
\[
\E(f(x_1))-\min_x(f(x))\leq \tfrac{1}{c}\big(f(x_0)-\min_x(f(x))\big)
\]
in time $O\Big(T_{4(\frac{2\lambda'+\lambda}{\lambda})^{1.5}}\sqrt{\frac{\lambda'}{\lambda}}\log c\Big)$.
\end{lemma}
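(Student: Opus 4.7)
The plan is to prove this via accelerated proximal point iteration, building a fast outer loop on top of the inner subproblem solver whose existence is assumed. The key observation is that the Moreau envelope $F(y)\eqdef\min_x(f(x)+\frac{\lambda'}{2}\|x-y\|_2^2)$ shares its minimizer with $f$ but is much better conditioned: $F$ is $\lambda'$-smooth and $\tfrac{\lambda\lambda'}{\lambda+\lambda'}$-strongly convex, so its condition number is only $\kappa\eqdef(\lambda+\lambda')/\lambda$. Moreover, $\nabla F(y)=\lambda'(y-x^*(y))$, where $x^*(y)$ is the exact prox, so the hypothetical inner solver gives us approximate access to $\nabla F$.

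First I would translate the hypothesis $\E(f(x_c))-f_{\min}\le \tfrac{1}{c}(f(x_0)-f_{\min})$ into a bound of the form $\E\|x_c-x^*(y)\|_2\le \eta$, using that the regularized objective is $\lambda'$-strongly convex in $x$, so a suboptimality $\delta$ in function value translates into a distance of order $\sqrt{\delta/\lambda'}$ from $x^*(y)$. This yields an inexact gradient oracle for $F$ with quantifiable noise. I would also verify that an approximate minimizer of the regularized problem can be translated back into an approximate value of $F$ by the envelope identity.

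Next I would feed this oracle into Nesterov's accelerated gradient method applied to $F$. The standard analysis needs $O(\sqrt{\kappa})$ iterations to reduce suboptimality of a $\lambda'$-smooth, $\tfrac{\lambda\lambda'}{\lambda+\lambda'}$-strongly convex function by a constant factor; reducing by a factor $c$ is achieved by restarting the constant-factor procedure $O(\log c)$ times, which explains the $\sqrt{\lambda'/\lambda}\log c$ outer factor in the quoted running time.

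The main technical obstacle will be propagating the inexactness of the prox evaluations through the accelerated iterations. As is standard for inexact accelerated methods (in the spirit of Schmidt--Le~Roux--Bach), per-step errors of magnitude $\eta$ can amplify into a final suboptimality scaling like $\kappa\cdot(\text{number of steps})^2\cdot\eta$. To absorb this amplification and still drive the overall relative error below $1/c$, each inner call must be run to relative accuracy $\Theta(1/\kappa^{3/2})$, which is exactly why the inner complexity appearing in the final bound is $T_{4\kappa^{3/2}}=T_{4((2\lambda'+\lambda)/\lambda)^{3/2}}$. I would close the argument by combining the accelerated convergence rate on $F$ with the identity $\min F=\min f$ to transfer the guarantee from the envelope back to $f$, obtaining the claimed $\tfrac{1}{c}$-suboptimality for $x_1$.
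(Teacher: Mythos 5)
This lemma is not proved in the paper at all---it is imported verbatim as a black box from \cite{frostig2015regularizing} (their Theorem 1.1)---so there is no internal proof to compare against; your sketch follows essentially the same route as that reference's own argument (accelerated approximate proximal point): the Moreau envelope's improved conditioning $\kappa=(\lambda+\lambda')/\lambda$, an inexact prox/gradient oracle extracted from the assumed inner solver via $\lambda'$-strong convexity of the regularized subproblem, and inexact Nesterov acceleration requiring inner accuracy of order $\kappa^{-3/2}$ over $O(\sqrt{\lambda'/\lambda}\log c)$ outer iterations. The only point to flag is that the hypothesis as restated here should be read as a guarantee on the regularized objective $f(x)+\frac{\lambda'}{2}\|x-x_0\|_2^2$ (the bare $f(x_c)$ in the display is an artifact of the restatement), which is exactly the reading your translation step relies on.
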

The measure of error in the above theorem coincides with the definition we gave for $\epsilon$-approximation to ridge regression, since if $f(x)=\|Ax-b\|_2^2 + \lambda \|x\|_2^2$ and $x^*=\argmin_x f(X)$ then for any $x\in\R^n$,
$(x-x^*)^T(A^TA+\lambda I)(x-x^*)=2(f(x)-f(x^*))$. For a proof, see for example \cite[Fact 39]{musco2017spectrum}.

Note that the term $\frac{\lambda'}{2}\|x-x_0\|_2^2$ is not exactly of the same shape as the ridge term $\lambda'\|x\|_2^2$, but since
\[
\|Ax-b\|_2^2 + \lambda\|x\|_2^2
+\lambda'\|x-x_0\|_2^2=
\|Ax-b\|_2^2+(\lambda+\lambda')\|x\|_2^2-2\lambda'x_0^\top x+\lambda'\|x_0\|_2^2,
\]
solving 
$\min_x(\|Ax-b\|_2^2 + \lambda\|x\|_2^2+\lambda'\|x-x_0\|_2^2)$ 
is at most as hard as solving ridge regression with vector $A^Tb+\lambda'x_0$ and parameter $\lambda+\lambda'$.
We are now ready to prove the result for preconditioned ridge-regression using our sparsifier as a preconditioner. 

\ifcolt
\begin{proof}\textbf{of Theorem \ref{thm:mainRidgeRegression}.}
We first explain how to compute an approximate solution for ridge regression with parameter $\lambda>0$ and then apply the acceleration framework of Lemma \ref{thm:acceleration} as a black-box.

Apply the sparsification scheme of Theorem \ref{thm:spectralSparsifier} on $A$ with parameter $\epsilon = \frac{\sqrt{\lambda}}{4\|A\|_2}$ as specified in Lemma \ref{lem:pIsGoodPreconditioner} and denote its output by $P$.
Solve the preconditioned linear-system $(P^\top P+\lambda I)^{-1}(A^\top A +\lambda I)x=(P^\top P+\lambda I)^{-1}b$ by any iterative method. 
As was described earlier, this takes $O_\epsilon(\nnz(A) + T_P^\lambda)$ time.
Use Conjugate gradients to solve each linear-system in $P^\top P +\lambda I$.
It takes $O_\epsilon(\sqrt{\kappa_\lambda(P^\top P)}\nnz(P))$ time. Since $\|P\|_2\in (1\pm \epsilon)\|A\|_2$ and $\kappa_\lambda(P^\top P)=\frac{\|P\|_2^2}{\lambda}$, by Theorem \ref{thm:spectralSparsifier},
\ifcolt
\begin{align*}
    T^\lambda_{\mathsf{CG}}(P,\epsilon)&=O_\epsilon\Big(\nnz(P)\sqrt{\kappa_\lambda(P^\top P)}\Big)\\
    &=O_\epsilon\Big(\frac{\|A\|_2^3}{\lambda^{1.5}}\ns(A) \sr(A)\log n+\frac{\|A\|_2^2}{\lambda}\sqrt{\ns(A) n\cdot \sr(A)}\log n\Big).
\end{align*}
\else
\[
T^\lambda_{\mathsf{CG}}(P,\epsilon)=O_\epsilon\Big(\nnz(P)\sqrt{\kappa_\lambda(P^\top P)}\Big)
=O_\epsilon\Big(\frac{\|A\|_2^3}{\lambda^{1.5}}\ns(A) \sr(A)\log n+\frac{\|A\|_2^2}{\lambda}\sqrt{\ns(A) n\cdot \sr(A)}\log n\Big).
\]
\fi
Applying the acceleration framework (Lemma \ref{thm:acceleration}) yields a running time of
\[
\tilde{O}\bigg(\Big(\nnz(A)+\frac{\|A\|_2^3}{\lambda'^{1.5}}\ns(A) \sr(A)+\frac{\|A\|_2^2}{\lambda'}\sqrt{\ns(A) n\cdot \sr(A)}\Big)\sqrt{\frac{\lambda'}{\lambda}}\bigg).
\]
Set $\lambda'=\|A\|_2^2\big(\frac{\ns(A) \sr(A)}{\nnz(A)}\big)^{2/3}$. If $n<\ns(A) \sr(A)\frac{\|A\|_2^2}{\lambda'}=(\ns(A) \sr(A))^{1/3}(\nnz(A))^{2/3}$, which is a reasonable assumption in many cases (for example, if $\nnz(A)>n^{3/2}$), then this choice for $\lambda'$ balances the two major terms, resulting in the stated running time.
\end{proof}
\else
\begin{proof}[Proof of Theorem \ref{thm:mainRidgeRegression}]
We first explain how to compute an approximate solution for ridge regression with parameter $\lambda>0$ and then apply the acceleration framework of Lemma \ref{thm:acceleration} as a black-box.

Apply the sparsification scheme of Theorem \ref{thm:spectralSparsifier} on $A$ with parameter $\epsilon = \frac{\sqrt{\lambda}}{4\|A\|_2}$ as specified in Lemma \ref{lem:pIsGoodPreconditioner} and denote its output by $P$.
Solve the preconditioned linear-system $(P^\top P+\lambda I)^{-1}(A^\top A +\lambda I)x=(P^\top P+\lambda I)^{-1}b$ by any iterative method. 
As was described earlier, this takes $O_\epsilon(\nnz(A) + T_P^\lambda)$ time.
Use Conjugate gradients to solve each linear-system in $P^\top P +\lambda I$.
It takes $O_\epsilon(\sqrt{\kappa_\lambda(P^\top P)}\nnz(P))$ time. Since $\|P\|_2\in (1\pm \epsilon)\|A\|_2$ and $\kappa_\lambda(P^\top P)=\frac{\|P\|_2^2}{\lambda}$, by Theorem \ref{thm:spectralSparsifier},
\[
T^\lambda_{\mathsf{CG}}(P,\epsilon)=O_\epsilon\Big(\nnz(P)\sqrt{\kappa_\lambda(P^\top P)}\Big)
=O_\epsilon\Big(\frac{\|A\|_2^3}{\lambda^{1.5}}\ns(A) \sr(A)\log n+\frac{\|A\|_2^2}{\lambda}\sqrt{\ns(A) n\cdot \sr(A)}\log n\Big).
\]
Applying the acceleration framework (Lemma \ref{thm:acceleration}) yields a running time of
\[
\tilde{O}\bigg(\Big(\nnz(A)+\frac{\|A\|_2^3}{\lambda'^{1.5}}\ns(A) \sr(A)+\frac{\|A\|_2^2}{\lambda'}\sqrt{\ns(A) n\cdot \sr(A)}\Big)\sqrt{\frac{\lambda'}{\lambda}}\bigg).
\]
Set $\lambda'=\|A\|_2^2\big(\frac{\ns(A) \sr(A)}{\nnz(A)}\big)^{2/3}$. If $n<\ns(A) \sr(A)\frac{\|A\|_2^2}{\lambda'}=(\ns(A) \sr(A))^{1/3}(\nnz(A))^{2/3}$, which is a reasonable assumption in many cases (for example, if $\nnz(A)>n^{3/2}$), then this choice for $\lambda'$ balances the two major terms, resulting in the stated running time.
\end{proof}
\fi

\subsection{Faster Algorithm for Inputs with Uniform Row Norms}\label{sec:RRUnifromNorms}

The best running time, to our knowledge, 
for the ridge-regression problem on sparse matrices in general is using 
Stochastic Variance Reduced Gradient Descent (SVRG), originally introduced by \cite{johnson2013accelerating}, coupled with the acceleration framework of \cite{frostig2015regularizing}. We utilize this method for solving the linear-system for $P^\top P +\lambda I$, where $P$ is the preconditioner. 
This method is fastest if the norms of the rows/columns of the input matrix $A$ are uniform. 
We show the following theorem for solving ridge-regression on numerically sparse matrices with uniform row/column norms.

\begin{theorem}\label{thm:uniformRowNormsRR}
There exists an algorithm that,
given a matrix $A\in \R^{m \times n}$ having uniform rows norms or uniform columns norms, a vector $x_0\in R^n$ and parameters $\lambda>0, \epsilon>0$, 
computes an $\epsilon$-approximate solution to the ridge regression problem
in expected time 
\[
O_\epsilon(\nnz(A)) + \tilde{O}_\epsilon\Big(\nnz(A)^{2/3} \sqrt{\sr(A)}\ns(A)^{1/3}  n^{-1/6} \sqrt{\kappa_\lambda(A^\top A)}\Big).
\]
\end{theorem}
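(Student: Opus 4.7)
The approach mirrors the proof of Theorem \ref{thm:mainRidgeRegression}: apply the Frostig et al.\ acceleration framework (Lemma \ref{thm:acceleration}) around an inner linear-system solver, which itself uses a spectral-norm sparsifier $P$ of $A$ as a preconditioner via Lemma \ref{lem:pIsGoodPreconditioner}. The only change is the choice of inner solver. Under the uniform-norms hypothesis on $A$, we replace Conjugate Gradients by an \emph{accelerated stochastic method} (of SVRG/Katyusha type) on the ridge problem $\min_x \|Px-y\|_2^2+\lambda'\|x\|_2^2$, for which the iteration count is governed by the ``average smoothness'' $\|P\|_F^2/m$ rather than the worst-case smoothness $\|P\|_2^2$ that CG depends on. This is where the $(\sr(A)/n)^{1/6}$ savings claimed in the introduction will arise.

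Concretely, first invoke Theorem \ref{thm:sparsifierForAMM} (or its column version) with error parameter $\epsilon_0 = \Theta(\sqrt{\lambda'}/\|A\|_2)$ to produce a preconditioner $P$ whose per-row sparsity is uniformly $s=\tilde{O}(\epsilon_0^{-2}\ns(A)) = \tilde O(\ns(A)\kappa_{\lambda'})$. A short second-moment calculation for the $\ell_1$-row sampler gives $\E\|P_i\|_2^2 = (1+O(\ns(A)/s))\|A_i\|_2^2$, so under uniform row norms of $A$ one can establish by concentration and a union bound over the $m$ rows that with high probability $\max_i \|P_i\|_2^2 = O(\|P\|_F^2/m) = O(\sr(A)\|A\|_2^2/m)$. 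This is the ingredient that will feed the worst-row smoothness constant required by the stochastic solver.

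Second, plug into accelerated SVRG/Katyusha for the inner problem. With strong convexity $\lambda'$ and maximum per-term smoothness $O(\sr(A)\|A\|_2^2/m)$, the iteration complexity is $\tilde{O}(m+\sqrt{m\sr(A)\kappa_{\lambda'}(P^\top P)})$, each iteration costing $O(s)$; together with the $O(\nnz(A))$ cost of applying $A^\top A+\lambda' I$ once per outer step, this gives an inner running time of $\tilde{O}(\nnz(A)+\nnz(P)+s\sqrt{m\sr(A)\kappa_{\lambda'}})$. Finally, feed this into Lemma \ref{thm:acceleration}, which multiplies the cost by $\sqrt{\lambda'/\lambda}$, and choose $\lambda'$ to balance $\nnz(A)\sqrt{\lambda'/\lambda}$ against the terms that decrease in $\lambda'$. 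Tracking exponents yields a bound of the form $\tilde O_\epsilon(\nnz(A)^{2/3}\sqrt{\sr(A)}\,\ns(A)^{1/3} n^{-1/6}\sqrt{\kappa_\lambda(A^\top A)})$ as claimed; the uniform-columns case is handled symmetrically by working with $A^\top$.

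The main obstacle is the transfer of the uniform-norms hypothesis from $A$ to the \emph{randomly sampled} preconditioner $P$ in a sufficiently strong, high-probability form: accelerated SVRG is sensitive to the worst-row smoothness, so an in-expectation statement about $\|P_i\|_2^2$ is not enough, and one must argue uniform concentration across all $m$ rows. A secondary obstacle is the careful bookkeeping of the three error budgets (sparsification accuracy, inner-solver accuracy, outer acceleration accuracy) and the polylogarithmic factors that they contribute, so that the hidden constants in the final $\tilde O_\epsilon(\cdot)$ do not destroy the $(\sr(A)/n)^{1/6}$ improvement.
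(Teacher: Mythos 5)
Your high-level architecture matches the paper's: sparsify $A$ to get a preconditioner $P$ via Lemma~\ref{lem:pIsGoodPreconditioner}, solve the inner linear system with an accelerated stochastic method, and wrap everything in the Frostig et al.\ acceleration framework (Lemma~\ref{thm:acceleration}) with a tuned $\lambda'$. But the choice of sparsifier, and consequently the way the uniform-norm hypothesis enters the analysis, is wrong and would not reproduce the claimed bound.

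You build $P$ from Theorem~\ref{thm:sparsifierForAMM} (the $\ell_1$ row sampler), deliberately because it gives uniform per-row sparsity $s = \tilde{O}(\kappa_{\lambda'}\ns(A))$ and hence a clean worst-case per-iteration cost. But then $\nnz(P) = \tilde{O}(m\kappa_{\lambda'}\ns(A))$, whereas the paper uses Theorem~\ref{thm:spectralSparsifier}, for which $\nnz(P) = \tilde{O}(\kappa_{\lambda'}\ns(A)\sr(A) + \sqrt{\kappa_{\lambda'}\ns(A)\sr(A)n})$. The ratio between the two is roughly $m/\sr(A)$, and tracing this through Lemma~\ref{thm:bestSparseRidgeRuntime} and the outer acceleration shows your final bound is worse than the stated theorem by a factor of order $\sqrt{mn}/\sr(A) \geq 1$, with equality only when $\sr(A) = n = m$. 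So the route you chose cannot yield the $\nnz(A)^{2/3}\sqrt{\sr(A)}\ns(A)^{1/3}n^{-1/6}\sqrt{\kappa_\lambda}$ bound.

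The missing idea is that one does not need uniform row sparsity of $P$ at all. Theorem~\ref{thm:spectralSparsifier} produces rows of wildly varying sparsity, but SVRG samples row $i$ with probability $p_i = \|P_i\|_2^2/\|P\|_F^2$, so the \emph{expected} per-iteration cost is the weighted average $s^*(P) = \sum_i p_i\|P_i\|_0$, not $\rsp(P)$. This is where the uniform-row-norms hypothesis on $A$ is actually used: combined with Lemma~\ref{lem:rowNormBound}, which bounds $\E\|P_i\|_2^2 \leq \|A_i\|_2^2 + \epsilon^2\|A\|_2^2$, it gives $s^*(P) \leq \nnz(P)\left(\tfrac{1}{n} + \tfrac{\lambda}{\|P\|_F^2}\right)$ \emph{in expectation}, with no need for the uniform high-probability concentration over all $m$ rows that you flag as the main obstacle. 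Your obstacle is real for your route, but the paper sidesteps it entirely: the expectation-level statement suffices because the theorem's running time is itself stated in expectation. Bookkeeping the error budgets is then routine; the substantive step you are missing is the expected-row-sparsity argument, which is what lets the paper pair the more efficient sparsifier from Theorem~\ref{thm:spectralSparsifier} with accelerated SVRG.
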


Note that 
$(A^\top A +\lambda I)^{-1}A^\top = A^\top (A A^\top +\lambda I)^{-1}$.
Hence, for any vector $v$, one can compute an $\epsilon$-approximation for $(A^\top A +\lambda I)^{-1}A^\top v$ 
in time $O(\nnz(A))+T^\lambda (A^\top,\epsilon)$. This doesn't change the condition number of the problem, i.e, $\kappa_\lambda(A^\top A)=\kappa_\lambda (AA^\top)$. Hence we only analyze the case where $A$ is pre-processed such that the norms of the rows are uniform.



We provide a theorem from \cite{musco2017spectrum} that summarizes the running time of accelerated-SVRG. 

\begin{lemma}[Theorem 49 of \cite{musco2017spectrum}]\label{thm:bestSparseRidgeRuntime}
For a matrix $M$, vector $y \in \R^n$ and $\lambda, \epsilon > 0$, there exists an algorithm that computes with high probability an $\epsilon$-approximate solution to $\min_x \|Mx-y\|_2^2+\lambda\|x\|_2^2$ in time $T^\lambda(M, \epsilon)$ such that 
\begin{align*}
T^\lambda(M, \epsilon) &\leq O_\epsilon (\nnz(M)) + \tilde{O}_\epsilon \bigg(\sqrt{\nnz(M) \cdot \frac{\|M\|_F^2}{\lambda} \cdot \rsp(M) } \bigg).
\end{align*}
\end{lemma}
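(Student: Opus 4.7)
The plan is to combine non-uniform SVRG~\cite{johnson2013accelerating} for the ridge-regression objective with the black-box acceleration of Lemma~\ref{thm:acceleration}, and to tune the auxiliary regularization so that the two resulting terms balance.

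Decompose $\|Mx-y\|_2^2 = \sum_{i=1}^m (M_i^\top x-y_i)^2$ and design the inner solver to sample row $i$ with probability $p_i=\|M_i\|_2^2/\|M\|_F^2$. With this importance-sampling choice, the ``effective smoothness'' of the finite sum scales like $\|M\|_F^2$ rather than $m\cdot\max_i\|M_i\|_2^2$, while each stochastic step touches only one row of $M$ and, using a lazy proximal update for the $\lambda\|x\|_2^2$ term, runs in $O(\rsp(M))$ time. A standard SVRG analysis then solves the $(\lambda+\lambda')$-regularized problem to constant multiplicative accuracy in expected time
\[
 t(\lambda+\lambda') \ =\ O(\nnz(M)) + \tilde O\!\left(\rsp(M)\cdot \frac{\|M\|_F^2}{\lambda+\lambda'}\right),
\]
where the first summand covers one full-gradient pass per epoch and the second bounds the $\tilde O(\|M\|_F^2/(\lambda+\lambda'))$ stochastic steps. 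The $O(\log(1/\epsilon))$ repetitions needed to boost from constant multiplicative accuracy to $\epsilon$-accuracy are absorbed into $\tilde O_\epsilon$ and $O_\epsilon$.

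Now invoke Lemma~\ref{thm:acceleration} with auxiliary parameter $\lambda'\geq 2\lambda$, obtaining total runtime
\[
 \tilde O_\epsilon\!\left(\sqrt{\tfrac{\lambda'}{\lambda}}\cdot t(\lambda+\lambda')\right)
 \ \leq\ \tilde O_\epsilon\!\left(\sqrt{\tfrac{\lambda'}{\lambda}}\,\nnz(M) \ +\ \rsp(M)\cdot \frac{\|M\|_F^2}{\sqrt{\lambda\lambda'}}\right).
\]
Setting $\lambda' = \rsp(M)\cdot\|M\|_F^2/\nnz(M)$ equates the two summands and yields the claimed bound $\tilde O_\epsilon\bigl(\sqrt{\nnz(M)\cdot \rsp(M)\cdot \|M\|_F^2/\lambda}\bigr)$, with the outer $O_\epsilon(\nnz(M))$ accounting for the initial read and full-gradient pass. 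If this optimal $\lambda'$ violates $\lambda'\geq 2\lambda$, then $\rsp(M)\|M\|_F^2/\lambda \leq 2\nnz(M)$ and the non-accelerated subroutine $t(\lambda)$ already runs in time $\tilde O_\epsilon(\nnz(M))$, which also fits the stated bound.

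The main obstacle is the variance analysis of the importance-weighted SVRG estimator for the quadratic components $f_i(x)=(M_i^\top x-y_i)^2$: one must verify that the choice $p_i\propto\|M_i\|_2^2$ yields an inner iteration count controlled by $\|M\|_F^2/(\lambda+\lambda')$ rather than the worst-case $\max_i\|M_i\|_2^2/(\lambda+\lambda')$, and that each stochastic step can indeed be executed in $O(\rsp(M))$ time by maintaining the contribution of the proximal regularizer lazily. Once these two technical ingredients are established, composition with Lemma~\ref{thm:acceleration} and the optimization over $\lambda'$ are routine.
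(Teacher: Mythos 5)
Your proposal is correct and reconstructs essentially the argument behind the cited result: the paper itself does not prove this lemma but imports it as Theorem 49 of \cite{musco2017spectrum}, whose proof is exactly accelerated SVRG, i.e., SVRG with row sampling $p_i\propto\|M_i\|_2^2$ (the same distribution the paper later quotes in the proof of the uniform-norm ridge regression theorem) combined with the approximate-proximal-point acceleration of \cite{frostig2015regularizing} (Lemma~\ref{thm:acceleration}) and the balancing choice of $\lambda'$ that you make. Your bookkeeping --- inner cost $O(\nnz(M))+\tilde O\big(\rsp(M)\,\|M\|_F^2/(\lambda+\lambda')\big)$ per constant-factor progress, the choice $\lambda'=\rsp(M)\|M\|_F^2/\nnz(M)$, and the fallback to unaccelerated SVRG when this violates $\lambda'\ge 2\lambda$ --- is sound and yields exactly the stated bound.
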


Before we prove Theorem \ref{thm:uniformRowNormsRR}, note the following properties of the sampling in Theorem \ref{thm:spectralSparsifier}.

\begin{lemma}\label{lem:rowNormBound}
Given a matrix $A\in\R^{m\times n}$, parameter $\epsilon>0$ and a random matrix $P\in\R^{m\times n}$ satisfying $\|P-A\|_2\leq \epsilon\|A\|_2$ and $\E P =A$, then the expected $\ell_2$-norm of the $i$-th row and of the $j$-th column of $P$ are bounded as
\[
\E\|P_i\|_2^2\leq \|A_i\|_2^2 + \epsilon^2\|A\|_2^2,
\]
\[
\E\|P^j\|_2^2\leq \|A^j\|_2^2 + \epsilon^2\|A\|_2^2.
\]
\end{lemma}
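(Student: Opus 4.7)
My plan is to observe that the bound on a single row (or column) norm reduces directly to the spectral bound, via a bias-variance decomposition that exploits the unbiasedness $\E P = A$.

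First I would write, for each row $i$,
\[
  \|P_i\|_2^2 = \|A_i + (P_i - A_i)\|_2^2 = \|A_i\|_2^2 + 2\langle A_i, P_i - A_i\rangle + \|P_i - A_i\|_2^2,
\]
and take expectation. Since $\E P = A$ entrywise, $\E(P_i - A_i) = 0$, so the cross term vanishes and
\[
  \E\|P_i\|_2^2 = \|A_i\|_2^2 + \E\|P_i - A_i\|_2^2.
\]
Thus the whole claim reduces to showing $\E\|P_i - A_i\|_2^2 \leq \epsilon^2 \|A\|_2^2$.

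For that step, I would note that $P_i - A_i = e_i^\top (P-A)$ is obtained by left-multiplying $P-A$ by a unit vector, so the spectral-norm bound transfers directly:
\[
  \|P_i - A_i\|_2 = \|e_i^\top (P-A)\|_2 \leq \|e_i\|_2 \cdot \|P - A\|_2 \leq \epsilon \|A\|_2,
\]
which holds on the event in which the hypothesis $\|P-A\|_2 \leq \epsilon\|A\|_2$ is valid; squaring and taking expectation gives the row bound. The column bound is symmetric: replace $e_i^\top$ by $e_j$ on the right of $P-A$ and argue identically.

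There is essentially no hard step here; the only thing to be careful about is the probabilistic status of the hypothesis $\|P-A\|_2 \leq \epsilon \|A\|_2$. In the intended application, $P$ is the output of Theorem \ref{thm:spectralSparsifier}, and the spectral bound holds with high probability, so the calculation above is to be read conditionally on that event (the contribution from the low-probability complement is negligible and can be absorbed into the stated bound, since the expected squared sampling weights are controlled by the construction of $P$).
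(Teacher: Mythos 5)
Your proposal is correct and takes essentially the same route as the paper: both rely on the row-restriction bound $\|P_i-A_i\|_2\leq\|P-A\|_2\leq\epsilon\|A\|_2$ and then use $\E P=A$ to convert $\E\|P_i-A_i\|_2^2$ into $\E\|P_i\|_2^2-\|A_i\|_2^2$. The only difference is that you spell out the bias-variance expansion and the vanishing cross term, which the paper performs implicitly in the phrase ``squaring this and taking the expectation yields,'' and you add a remark on the probabilistic status of the hypothesis, which the paper handles by treating the spectral bound as a deterministic assumption and noting a high-probability variant in passing.
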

\begin{proof} 
By properties of the spectral-norm, $\|P_i-A_i\|_2\leq \|P-A\|_2\leq \epsilon\|A\|_2$. 
Squaring this and taking the expectation yields
$\E(\|P_i\|_2^2) - \|A_i\|_2^2 \leq \epsilon^2\|A\|_2^2$
as desired. The same holds for the columns. One can similarly get an high probability statement.
\end{proof}
Summing over all the rows or columns yields an immediate corollary,
\begin{corollary}\label{cor:frobeniusBound}
The expected Frobenius-norm of $P$ is bounded as
\ifcolt
$$\E\|P\|_F^2\leq \|A\|_F^2 + \epsilon^2\min(n,m)\|A\|_2^2.$$
\else
$\E\|P\|_F^2\leq \|A\|_F^2 + \epsilon^2\min(n,m)\|A\|_2^2.$
\fi
\end{corollary}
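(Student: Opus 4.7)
The plan is to derive the Frobenius bound directly from Lemma~\ref{lem:rowNormBound} by summing the per-row (or per-column) bounds, and then taking the minimum of the two resulting estimates.

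First, I would use the identity $\|P\|_F^2 = \sum_{i=1}^m \|P_i\|_2^2 = \sum_{j=1}^n \|P^j\|_2^2$, which holds for every matrix $P \in \R^{m\times n}$. By linearity of expectation, this gives $\E\|P\|_F^2 = \sum_i \E\|P_i\|_2^2 = \sum_j \E\|P^j\|_2^2$, so I get two equivalent expressions for the quantity of interest.

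Next, I would apply Lemma~\ref{lem:rowNormBound} to each row and to each column. Summing the row bound over $i=1,\ldots,m$ yields
\[
  \E\|P\|_F^2 \leq \sum_{i=1}^m \bigl(\|A_i\|_2^2 + \epsilon^2\|A\|_2^2\bigr) = \|A\|_F^2 + m\,\epsilon^2\|A\|_2^2,
\]
while summing the column bound over $j=1,\ldots,n$ yields
\[
  \E\|P\|_F^2 \leq \sum_{j=1}^n \bigl(\|A^j\|_2^2 + \epsilon^2\|A\|_2^2\bigr) = \|A\|_F^2 + n\,\epsilon^2\|A\|_2^2.
\]

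Finally, taking the tighter of the two bounds gives $\E\|P\|_F^2 \leq \|A\|_F^2 + \min(n,m)\,\epsilon^2\|A\|_2^2$, as claimed. There is no real obstacle here; the entire argument is a one-line reduction to Lemma~\ref{lem:rowNormBound} combined with the two natural decompositions of the Frobenius norm, which is exactly why the statement is labelled a corollary.
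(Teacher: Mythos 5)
Your proof is correct and matches the paper's argument exactly: the paper likewise obtains the corollary by summing the row and column bounds of Lemma~\ref{lem:rowNormBound} over all rows or all columns and taking the better of the two, which yields the $\min(n,m)$ factor. Nothing is missing.
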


We are now ready to show the result for ridge-regression in the case that the norms of the rows of the input matrix $A$ are uniform.

\ifcolt
\begin{proof}\textbf{of Theorem \ref{thm:uniformRowNormsRR}.}
We first explain how to compute an approximate solution for ridge regression with parameter $\lambda>0$ and then apply the acceleration framework of Lemma \ref{thm:acceleration} as a black-box.

Apply the sparsification scheme of Theorem \ref{thm:spectralSparsifier} on $A$ with parameter $\epsilon = \frac{\sqrt{\lambda}}{4\|A\|_2}$ as specified in Lemma \ref{lem:pIsGoodPreconditioner} and denote its output by $P$.
Solve the preconditioned linear-system $(P^\top P+\lambda I)^{-1}(A^\top A +\lambda I)x=(P^\top P+\lambda I)^{-1}b$ by any iterative method. 
As was described earlier, this takes $O_\epsilon(\nnz(A) + T_P^\lambda)$ time.
Use Accelerated-SVRG (Lemma \ref{thm:bestSparseRidgeRuntime}) to solve each linear-system in $P^\top P +\lambda I$.

The bulk of the running time of the Accelerated-SVRG method is in applying vector-vector multiplication in each iteration, where one of the vectors is a row of $P$. The number of iterations have dependence on $\sr(P)$, which by Corollary \ref{cor:frobeniusBound} is bounded by $O(\sr(A)+\frac{n}{\kappa_\lambda})$. The running time of each iteration is usually bounded by the maximum row sparsity, i.e, $\rsp(P)$. Instead, we can bound the expected running time with the expected row sparsity, denote as $s^*(P)$. The distribution for sampling each row is $p_i=\frac{\|P_i\|_2^2}{\|P\|_F^2}$ ~\citep{musco2017spectrum}. Hence, the expected running time will depend on $\sum_i p_i\|P_i\|_0$ instead of $\rsp(P)$. 
By Lemma \ref{lem:rowNormBound} and the assumption that the norms of the rows of $A$ are uniform,
\begin{equation}\label{eq:expectedSparsityBound}
s^*(P) = \sum_i p_i\|P_i\|_0 \leq \sum_i \frac{\|A_i\|_2^2 + \lambda}{\|P\|_F^2}\|P_i\|_0 \leq \nnz(P)\Big(\frac{1}{n} + \frac{\lambda}{\|P\|_F^2}\Big).    
\end{equation}

Now, by Lemma \ref{thm:bestSparseRidgeRuntime}, equation \ref{eq:expectedSparsityBound} and corollary \ref{cor:frobeniusBound},  
\begin{align*}
 T^\lambda(\textsf{P}, \epsilon) &\leq O_\epsilon \Big(\nnz(P) +  \sqrt{\nnz(P) s^*(P)\sr(P) \cdot \kappa_\lambda(P^\top P) } \Big)\\
 &\leq O_\epsilon \bigg(\nnz(P)+ \nnz(P) \sqrt{\frac{\sr(P) \cdot \kappa_\lambda(P^\top P)}{n} +1} \bigg)\\ 
 &\leq O_\epsilon \bigg( \nnz(P) + \nnz(P) \sqrt{\frac{\sr(A) \cdot \kappa_\lambda(A^\top A)}{n}} \bigg)\\ 
 &\leq O_\epsilon(\nnz(P)) + \tilde{O}_\epsilon\Big( \frac{\kappa_\lambda(A^\top A)^{3/2}  \ns(A) \sr(A)^{3/2}}{\sqrt{n}}\Big).
\end{align*}

The last inequality is by plugging in $\nnz(P)$ for the second term. Applying the acceleration framework (Lemma \ref{thm:acceleration}) to the preconditioned problem (i.e, $P$ is a $\frac{c}{\sqrt{\kappa_{\lambda'}(A\top A)}}$-spectral-norm sparsifier of $A$), yields running time of
\[
\tilde{O}_\epsilon\bigg(\nnz(A) + \Big( \nnz(A) + \frac{\kappa_{\lambda'}(A^\top A)^{3/2}  \ns(A) \sr(A)^{3/2}}{\sqrt{n}}\Big)\sqrt{\frac{\lambda'}{\lambda}}\bigg)
\]
Setting $\lambda' = \frac{\|A\|_2^2 \ns(A)^{2/3} \sr(A)}{n^{1/3}\nnz(A)^{2/3}}$ results in the stated running time.
\end{proof}
\else
\begin{proof}[Proof of Theorem \ref{thm:uniformRowNormsRR}]
We first explain how to compute an approximate solution for ridge regression with parameter $\lambda>0$ and then apply the acceleration framework of Lemma \ref{thm:acceleration} as a black-box.

Apply the sparsification scheme of Theorem \ref{thm:spectralSparsifier} on $A$ with parameter $\epsilon = \frac{\sqrt{\lambda}}{4\|A\|_2}$ as specified in Lemma \ref{lem:pIsGoodPreconditioner} and denote its output by $P$.
Solve the preconditioned linear-system $(P^\top P+\lambda I)^{-1}(A^\top A +\lambda I)x=(P^\top P+\lambda I)^{-1}b$ by any iterative method. 
As was described earlier, this takes $O_\epsilon(\nnz(A) + T_P^\lambda)$ time.
Use Accelerated-SVRG (Lemma \ref{thm:bestSparseRidgeRuntime}) to solve each linear-system in $P^\top P +\lambda I$.

The bulk of the running time of the Accelerated-SVRG method is in applying vector-vector multiplication in each iteration, where one of the vectors is a row of $P$. The number of iterations have dependence on $\sr(P)$, which by Corollary \ref{cor:frobeniusBound} is bounded by $O(\sr(A)+\frac{n}{\kappa_\lambda})$. The running time of each iteration is usually bounded by the maximum row sparsity, i.e, $\rsp(P)$. Instead, we can bound the expected running time with the expected row sparsity, denote as $s^*(P)$. The distribution for sampling each row is $p_i=\frac{\|P_i\|_2^2}{\|P\|_F^2}$ ~\citep{musco2017spectrum}. Hence, the expected running time will depend on $\sum_i p_i\|P_i\|_0$ instead of $\rsp(P)$. 
By Lemma \ref{lem:rowNormBound} and the assumption that the norms of the rows of $A$ are uniform,
\begin{equation}\label{eq:expectedSparsityBound}
s^*(P) = \sum_i p_i\|P_i\|_0 \leq \sum_i \frac{\|A_i\|_2^2 + \lambda}{\|P\|_F^2}\|P_i\|_0 \leq \nnz(P)\Big(\frac{1}{n} + \frac{\lambda}{\|P\|_F^2}\Big)    
\end{equation}

Now, by Lemma \ref{thm:bestSparseRidgeRuntime}, equation \ref{eq:expectedSparsityBound} and corollary \ref{cor:frobeniusBound},  
\begin{align*}
 T^\lambda(\textsf{P}, \epsilon) &\leq O_\epsilon \Big(\nnz(P) +  \sqrt{\nnz(P) s^*(P)\sr(P) \cdot \kappa_\lambda(P^\top P) } \Big)\\
 &\leq O_\epsilon \bigg(\nnz(P)+ \nnz(P) \sqrt{\frac{\sr(P) \cdot \kappa_\lambda(P^\top P)}{n} +1} \bigg)\\ 
 &\leq O_\epsilon \bigg( \nnz(P) + \nnz(P) \sqrt{\frac{\sr(A) \cdot \kappa_\lambda(A^\top A)}{n}} \bigg)\\ 
 &\leq O_\epsilon(\nnz(P)) + \tilde{O}_\epsilon\Big( \frac{\kappa_\lambda(A^\top A)^{3/2}  \ns(A) \sr(A)^{3/2}}{\sqrt{n}}\Big).
\end{align*}

The last inequality is by plugging in $\nnz(P)$ for the second term. Applying the acceleration framework (Lemma \ref{thm:acceleration}) to the preconditioned problem (i.e, $P$ is a $\frac{c}{\sqrt{\kappa_{\lambda'}(A\top A)}}$-spectral-norm sparsifier of $A$), yields running time of
\[
\tilde{O}_\epsilon\bigg(\nnz(A) + \Big( \nnz(A) + \frac{\kappa_{\lambda'}(A^\top A)^{3/2}  \ns(A) \sr(A)^{3/2}}{\sqrt{n}}\Big)\sqrt{\frac{\lambda'}{\lambda}}\bigg)
\]
Setting $\lambda' = \frac{\|A\|_2^2 \ns(A)^{2/3} \sr(A)}{n^{1/3}\nnz(A)^{2/3}}$ results in the stated running time.
\end{proof}
\fi

\fi

\bibliography{references.bib}

\end{document}